
\documentclass[AMA,STIX1COL]{WileyNJD-v2}
\usepackage{moreverb}

\usepackage{hypernat}
\usepackage{natbib}
\setcitestyle{super}
\usepackage{showframe}
\setlength{\bibsep}{0pt plus 0.3ex}
\newcommand\BibTeX{{\rmfamily B\kern-.05em \textsc{i\kern-.025em b}\kern-.08em
T\kern-.1667em\lower.7ex\hbox{E}\kern-.125emX}}

\usepackage{amsmath,amsthm,amssymb}
\usepackage{graphicx,epstopdf,framed}
\usepackage{euscript,paralist}
\usepackage{float}

\numberwithin{equation}{section}


%
\newfont{\roma}{cmr10 scaled 1200}

\newcommand{\rline}  {{\mathbb R}}

\newcommand{\dd}   {{\rm d}\hbox{\hskip 0.5pt}}

\newcommand{\mm}    {{\hbox{\hskip 0.5pt}}}

\newcommand{\bluff} {{\hbox{\raise 15pt \hbox{\mm}}}}
\newcommand{\sbluff}{{\hbox{\raise  7pt \hbox{\mm}}}}

\newcommand{\FORALL} {{\hbox{$\hskip 11mm \forall \;$}}}

%

%

%
%
\makeatletter
\makeatother
\newcommand{\bbm}[1]{\left[\begin{matrix} #1 \end{matrix}\right]}

\usepackage{setspace}

\articletype{Research Article}%

\received{:}
\revised{:}
\accepted{:}

\begin{document}

\title{Model reference adaptive control for state and input constrained linear systems}

\author[1]{Sudipta Chattopadhyay}

\author[1]{Srikant Sukumar*}

\author[1]{Vivek Natarajan}

\authormark{Sudipta Chattopadhyay \textsc{et al}}

\address[1]{\orgdiv{Systems and Control Engineering}, \orgname{Indian Institute of Technology Bombay}, \orgaddress{\state{Mumbai}, \country{India}}}

\corres{*Srikant Sukumar, Systems and Control Engineering, Indian Institute of Technology Bombay, Powai, Mumbai-400076, India. \email{srikant.sukumar@iitb.ac.in}}

\presentaddress{Systems and Control Engineering, Indian Institute of Technology Bombay, Powai, Mumbai-400076, India}

\abstract[Summary]{State and input constraints are ubiquitous in all engineering systems. In this article, we derive adaptive controllers for uncertain linear systems under pre-specified state and input constraints. Several modifications of the model reference adaptive control (MRAC) framework have been proposed to address input constraints in uncertain linear systems. Considering the infeasibility of arbitrary reference trajectories, reference modification has been implemented in the case of input constraints in literature. The resulting conditions on the reference and input signals are difficult to verify online. Similar results on state and input constraints together have also been proposed, albeit resulting in more complex and unverifiable conditions on the control. The primary objective of this article is therefore to account for state and input constraints in uncertain linear systems by providing easily verifiable conditions on the control and reference. A combination of reference modification and barrier Lyapunov methods in adaptive control are employed to arrive at these results. }

\keywords{Model reference adaptive control, nonlinear control, state constraint, input constraint.}


 \maketitle

\section{Introduction}
\label{sec1}
\setcounter{equation}{0} 

Most physical and chemical systems are designed to operate within certain safety limits and these limits translate to state and input constraints associated with the plant dynamics. To ensure safe and reliable operation of these systems, it is important to design control strategies that take into account these safety constraints. In addition, it is common for system parameters to be unknown in applications, necessitating the use of adaptation and identification.

Several efforts have been made to address the problem of designing controllers for linear systems that can handle state or/and input constraints. Methods such as model predictive control \cite{DeDhBh:23, BeBoMo:02},  robust optimal control \cite{MaSc:97}, invariant set theory \cite{Bl:99}, reference governor methods  \cite{BeCaMo:97}, control barrier function based quadratic programming \cite{AmGrTa:14} have been proposed for designing controllers that can handle various system constraints. However, all these methods involve solving an optimization problem in real-time that is computationally expensive. Further, for these methods, obtaining an analytic closed-form control law and proving stability under such safety constraints, is often very difficult. Feasibility of pre-specified constraints is also not easily answered in the optimal control context. Handling uncertainties (see \cite{AdGu:11,DhBh:21,LoCaAl:19,Ma:14} and references therein) is also possible, to an extent, though adding to the complexity of the associated optimization problem. 

In this article, we focus on control design methods {for uncertain linear systems} with proven stability guarantees, as opposed to optimality. For uncertain linear systems, a computationally inexpensive control method is the model reference adaptive control (MRAC) \cite[Chapter 6]{IoSu:12} that generates control actions such that an uncertain system tracks the behavior of a given stable reference model. MRAC controllers come with strong stability guarantees but do not guarantee system operation within pre-defined state and input constraints. Therefore, designing a modified MRAC, such that state and input constraints are respected, is a problem of practical interest.  

Several articles (see \cite{Af:18, ZhLo:22, ArYuBa:20} and references therein) propose MRACs that guarantee the convergence of the system tracking errors to zero while satisfying the state or error constraint for all time. In recent times, barrier Lyapunov function (BLF) based controller design has gained prominence for design of controllers, including MRACs that can handle state constraints (see \cite{AmGrTa:14, Af:18, KoAm:19} and references therein). Although BLF-based controllers guarantee system operation within user-defined state or error constraints, these controllers often result in large control effort whenever the states or errors approach their safety limits, potentially violating the plant input constraints. At the other end of the spectrum, for control constraints, a modified MRAC with reference trajectory modification has been proposed in \cite{LaHo:07} which has it's origins in \cite{KaAn:93}.

While there exists sufficient literature on input or state constraints individually, there are very few articles that consider these constraints simultaneously. As is evident in the BLF framework from our previous discussion, state and input constraints present contradictory challenges in adaptive control. In \cite{GhBh:22}, the authors have proposed an MRAC which guarantees system operation within user-defined bounds on state and input. In \cite{AnMaAf:21}, the design of an MRAC that can handle user-prescribed state and input constraints is achieved by developing an auxiliary reference model and using barrier Lyapunov functions. However, in \cite{GhBh:22} and \cite{AnMaAf:21}, the authors have assumed the existence of a feasible control policy such that their results hold and do not provide any verifiable condition that guarantees the existence of such a control policy. Obtaining such a verifiable condition is crucial in practical implementations. 

In this context, articles such as \cite{LaHo:07} have had far-reaching impact, since they do provide analytical conditions on system states for successful tracking under uncertain parameters and control constraints. The scope of their applications is also wider than pure BLF based methods such as in \cite{AnMaAf:21} due to modification of the reference that is allowed in the framework of \cite{LaHo:07}. However, the conditions presented in \cite{LaHo:07} have proven to be rather difficult to verify offline or online. Motivated by reference modification based methods and their subsequent works, the authors in \cite{WaLiYaTi:22} have demonstrated reference modification to handle \emph{both} state and control constraints. This however leads to implicit conditions that are even more complex to verify in practice than those in \cite{LaHo:07} due to the two layers of reference modification involved.


The contributions of our article arise from two main questions posed as an outcome of the above discussion.

\begin{enumerate}
    \item Is it possible to construct adaptive controllers that account for simultaneous state and control constraints? - We answer this in the affirmative and develop an adaptive controller with state constraints handled by BLF based design and control constraints via reference modification. Rigorous stability proofs are provided in the MRAC framework. This is different from \cite{AnMaAf:21} that use only BLF based design and \cite{WaLiYaTi:22} which use only reference modifications.

    \item  Is it possible to formulate verifiable conditions that quantify the feasible constraint sets? - We derive conditions on the pre-specified control and state norm bounds, that are easily verifiable online and also prior to implementation. This is a significant improvement over \cite{WaLiYaTi:22} and references therein where reference modification results in unverifiable conditions on control and state bounds. 
\end{enumerate}

We further demonstrate extensions of our results to additive Lipschitz nonlinearities in this article. We illustrate using example simulations that the combination of BLF and reference modification based design allows for consideration of more stringent state and control constraints while facilitating easy verification of feasible constraint combinations.


The rest of the paper is organized as follows: Section \ref{sec2} contains the problem statement. In Section \ref{sec3}, we design a tracking controller which guarantees that the state constraint will be satisfied for all time. In Section \ref{sec4}, we propose a modification to the reference trajectory and thereby to the control law such that both the state and input constraints remain satisfied for all time. In Section \ref{sec5}, we demonstrate the efficacy of our proposed controller and reference trajectory modification using a numerical example. Finally, we present some concluding remarks and directions for future work in Section \ref{sec6}.

\noindent
{\it Notations:} $\|X\|_2$ denotes the Euclidean norm of a vector $X \in \rline^n$. ${\rm Sign}(x)$ represents the sign of the variable $x$ and therefore is $1$ if $x  \geq 0$ and $-1$ if $x < 0$. $I_n \in \rline^{n \times n}$ represents the identity matrix of order $n$. The smallest and largest eigenvalues of a symmetric matrix $A \in \rline^{n\times n}$ are denoted as $\lambda_{\min}(A)$ and $\lambda_{\max}(A)$ respectively.

\section{Problem Statement} \label{sec2} \setcounter{equation}{0} 
Consider a linear time-invariant system whose dynamics are governed by the following state space model: $\forall \ t \geq 0$
\begin{equation}\label{eq:sys}
\dot{X}(t) = AX(t)+B\lambda u(t),\ \ \ X(0) = X_0.
\end{equation}
Here $u(t) \in \rline$ is the input and $X(t) \in \rline^n$ is the state. The system matrix $A \in \rline^{n \times n}$ and the input gain $\lambda \in \rline$ are  assumed to be unknown but the sign of $\lambda$ is considered to be known. The pair $(A,B\lambda)$ is assumed to be controllable and full state measurements are assumed to be available. Let the target system  be governed by the following state space model:
\begin{equation}\label{eq:target}
\dot{X}_m(t) = A_mX_m(t)+B_mf(t),\ \ \ X_m(0) = X_{m_0},
\end{equation}
where $A_m\in\rline^{n\times n}$ is the target system matrix,  $X_m(t) \in \rline^n$ is the target system state and the reference input $f$ is a scalar valued, bounded, continuous function of time. We consider the following assumptions on the system in \eqref{eq:sys} and on the target system in \eqref{eq:target}:
\begin{assumption}\label{as:match} \textit{There exist a vector $K\in\rline^n$ and a nonzero scalar $l \in \rline$ such that $A_m-A = B\lambda K^\top$ and $B_m = B\lambda l$ hold. Further, there exist known positive scalars $M_K$, $M_l$ and $m_l$ such that $K$ and $l$ satisfy $\|K\|_2 \leq M_K$ and $0<m_l \leq |l| \leq M_l$.}
\end{assumption}
\begin{assumption}\label{as:Lyap} \textit{The target system matrix $A_m$ is Hurwitz i.e., for any given symmetric positive definite matrix $Q$, there exists a symmetric positive definite matrix $P$ such that the algebraic Lyapunov equation $A_m^\top P + P A_m = -Q$ is satisfied.}
\end{assumption}

While the first statement in Assumption \ref{as:match} is similar to the usual matching condition in the MRAC literature (see \cite{LaHo:07} and \cite{IoSu:12} Chapter 6), the second statement imposes a requirement of some prior knowledge about the gains $K$ and $l$. While the knowledge of bounds $M_K$, $m_l$ and $M_l$ is not required and can be arbitrary for the control implementation, these are critical for apriori verification of constraint feasibility (see Remarks \ref{rm:Naira_diff} and \ref{rm:tighter_bounds}). Recall from the discussion below \eqref{eq:sys} that $B$, $B_m$ and ${\rm Sign}(\lambda)$ are known. Therefore, it is easy to see from $B_m = B\lambda l$ in Assumption \ref{as:match} that ${\rm Sign}(l)$ is also known. Now consider the following constraints under which we want to solve a tracking problem for the system in \eqref{eq:sys}:

\textbf{State constraint:} For a given positive real constant $M_x$, the state vector $X$ in \eqref{eq:sys} should satisfy
\begin{equation}\label{eq:state_constraint}
\|X(t)\|_2 < M_x \FORALL  t \geq 0. 
\end{equation}

\textbf{Input constraint:} For a given positive real constant $M_u$, the system input $u$ in \eqref{eq:sys} should satisfy 
\begin{equation}\label{eq:input_constraint}
 |u(t)|\leq M_u \FORALL t \geq 0.
\end{equation}

Now we state the problem of interest in this paper.
 
\begin{problem}\label{prob}
\textit{Design an adaptive control input $u$ for the system in \eqref{eq:sys} such that $X$ in \eqref{eq:sys} tracks the target system state $X_m$ in \eqref{eq:target} while $X(t)$ and $u(t)$ satisfy the state and input constraints \eqref{eq:state_constraint}-\eqref{eq:input_constraint} for all $t \geq 0$.}
\end{problem}


\begin{remark}
In a general setting with arbitrary $M_x,\,M_u$ in \eqref{eq:state_constraint}-\eqref{eq:input_constraint}, a controller solving Problem~\ref{prob} may not exist. For this reason, reference trajectory modification has been proposed in the literature (see \cite{LaHo:07, WaLiYaTi:22} and references therein) to handle the input constraints. In this method, the reference trajectory $f(t)$ of the target system is modified in such a way that the system state $X$ can track the modified target system states while satisfying the input constraint. We utilize a similar approach in this article. The key novelty in this article as opposed to literature on MRAC under input constraints is the inclusion of state constraints. We also present more easily verifiable conditions on the viable constraints as opposed to those in literature.
\end{remark}


In Section \ref{sec3}, we design an adaptive control input $u$ such that the system state $X$ can track the target system state $X_m$ while only the state constraint remains satisfied for all $t$. In Section \ref{sec4}, we modify the control law proposed in Section \ref{sec3} to account for both the input and the state constraint.


\section{MRAC under state constraint}\label{sec3}\setcounter{equation}{0} 

In this section, we design a control input $u$ to solve Problem \ref{prob} for the system in \eqref{eq:sys} when $M_u$ is arbitrarily large. In particular, we provide a result which shows that our designed control input drives the tracking error to $0$ as $t$ goes to infinity while the state constraint in \eqref{eq:state_constraint} remains satisfied for all $t \geq 0$. Throughout this section, we suppose that the reference trajectory $f(t)$, the initial reference state $X_{m_0}$ and the matrices $A_m$ and $B_m$ in \eqref{eq:target} are chosen in such a way that the target system state $X_m$ satisfies
\begin{equation}\label{eq:xmi_MXM}
\|X_m(t)\|_2 \leq M_{x_m} < M_x\ \ \ \forall \ t \geq 0
\end{equation}
for some positive constant $M_{x_m}$. Note that this results in no loss of generality since if $M_{x_m} > M_x$ then it is not possible for $X$ in \eqref{eq:sys} to simultaneously track $X_m$ in \eqref{eq:target} and satisfy the state constraints given in \eqref{eq:state_constraint}.

 Denote the tracking error for the system state $X$ in \eqref{eq:sys} as 
\begin{equation}\label{eq:E}
    E(t) = X(t)-X_m(t) \FORALL t \geq 0.
\end{equation}
Using this, \eqref{eq:sys}, \eqref{eq:target} and the matching conditions from Assumption \ref{as:match}, we get the error dynamics as follows:
\begin{equation}\label{eq:Edot}
\dot{E}(t) = A_m E(t) - B\lambda{K}^\top\! X(t) +B\lambda u(t)- B_mf(t),
\end{equation}
where $K$ and $l$ are as in Assumption \ref{as:match}. Since $A$ and $\lambda$ are unknown, we have $K$ and $l$ also unknown. We denote the estimates for $K$ and $l$ at time $t$ as $\hat K(t)$ and $\hat l(t)$ respectively and the errors of estimation for $K$ and $l$ at time $t$ as $\tilde{K}(t) = \hat K(t) - K$ and $\tilde{l}(t) = \hat l(t) - l$. Define $M_e = M_x - M_{x_m}$, where $M_{x_m}$ is as in \eqref{eq:xmi_MXM}. Since $M_{x_m} < M_x$, we have $M_e > 0$. Now we present a useful intermediate result.

\begin{lemma}\label{lm:Mtoxm}
\textit{Let the bounds in \eqref{eq:xmi_MXM} hold and $M = M_e \sqrt{\lambda_{\rm min}(P)}$, where $P$ is as in Assumption \ref{as:Lyap}. Then we have the following result:
\begin{equation}
E^\top(t)PE(t) < M^2\ \  \forall\ t \geq 0 \implies  \|X(t)\|_2 < M_x \ \forall \ t \geq 0.
\end{equation}}
\end{lemma}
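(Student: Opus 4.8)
The plan is to chain together two elementary estimates: a quadratic-form lower bound relating $E^\top P E$ to $\|E\|_2$, and the triangle inequality relating $\|X\|_2$ to $\|E\|_2$ and $\|X_m\|_2$. First I would fix an arbitrary $t \geq 0$ and invoke the hypothesis $E^\top(t) P E(t) < M^2$. Since $P$ is symmetric positive definite by Assumption \ref{as:Lyap}, the Rayleigh-quotient inequality gives $\lambda_{\min}(P)\,\|E(t)\|_2^2 \leq E^\top(t) P E(t)$. Combining this with the hypothesis and the definition $M = M_e \sqrt{\lambda_{\min}(P)}$ yields $\lambda_{\min}(P)\,\|E(t)\|_2^2 < M^2 = M_e^2\,\lambda_{\min}(P)$, and dividing through by $\lambda_{\min}(P) > 0$ gives $\|E(t)\|_2 < M_e$.

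Next I would recall from \eqref{eq:E} that $X(t) = E(t) + X_m(t)$, so by the triangle inequality $\|X(t)\|_2 \leq \|E(t)\|_2 + \|X_m(t)\|_2$. Using the bound just derived together with \eqref{eq:xmi_MXM} and the definition $M_e = M_x - M_{x_m}$, this gives $\|X(t)\|_2 < M_e + M_{x_m} = M_x$. Since $t \geq 0$ was arbitrary, the implication follows.

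There is no substantive obstacle here; the lemma is a bookkeeping step that converts the Lyapunov-type sublevel-set bound (which the BLF argument of Section~\ref{sec3} will control) into the state-norm constraint \eqref{eq:state_constraint}. The only points that need care are that $\lambda_{\min}(P)$ is strictly positive (so the division is legitimate) and that the strict inequalities are preserved throughout, so that one concludes $\|X(t)\|_2 < M_x$ rather than $\leq M_x$; both are immediate from $P \succ 0$ and $M_{x_m} < M_x$.
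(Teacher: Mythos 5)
Your proposal is correct and follows essentially the same route as the paper's own proof: the Rayleigh-quotient bound $\lambda_{\min}(P)\|E(t)\|_2^2 \leq E^\top(t)PE(t) < M^2$ to get $\|E(t)\|_2 < M_e$, followed by the triangle inequality with \eqref{eq:xmi_MXM} and $M_e = M_x - M_{x_m}$. Your explicit attention to the strictness of the inequalities and the positivity of $\lambda_{\min}(P)$ matches the paper's reasoning.
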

\begin{proof}
Observe from Assumption \ref{as:Lyap} that $P$ is a positive definite matrix and therefore we have $\lambda_{\rm min}(P) > 0$. Further, employing the definition of $M$ in the lemma statement we get the following: 
$$E^\top(t)PE(t) < M^2 \implies \lambda_{\rm min}(P)E^\top(t)E(t) < \lambda_{\rm min}(P)M_e^2, \,  \forall\ t \geq 0 .$$
 This further implies that $\|E(t)\|_2 < M_e$ for all $t \geq 0$ since $\lambda_{\rm min}(P) > 0$. Now, it is easy to see from this, \eqref{eq:xmi_MXM} and the definition of $M_e$ that 
\begin{equation}\label{eq:etox_bound}
\|X(t)\|_2 \leq \|E(t)\|_2+\|X_m(t)\|_2 < M_e+ M_{x_m} = M_x
\end{equation}
for all $t \geq 0$. This completes the proof.
\end{proof}

Denote 
\begin{equation}\label{eq:Lambda}
\mu(t) = \frac{2M^2E^\top(t) P B_m{\rm Sign}(l)}{(M^2 - E^\top(t) P E(t))^2}.
\end{equation}
\begin{theorem}\label{th:state_cons}
\textit{Let the bounds in \eqref{eq:xmi_MXM} and Assumptions \ref{as:match}, \ref{as:Lyap} hold. Let $M$ be as in Lemma \ref{lm:Mtoxm} and $\Gamma_K$ and $\Gamma_l$ be some positive constants. If the input to the system in \eqref{eq:sys} is given as
\begin{equation}\label{eq:inputs}
u(t) = \hat{K}^\top\!(t) X(t)+\hat l(t) f(t),
\end{equation} 
where the projection based update laws for $\hat l(t)$ and $\hat K (t)$ are given as}

\begin{equation}\label{eq:Lupdt}
  \!\!\!\!  \dot{\hat{l}}(t) \!=\! 
\begin{cases}
    -\Gamma_l\mu(t) f(t),& \hspace{0mm} \textit{if } 0<m_l <|\hat{l}(t)| < M_l\\
            & \hspace{0mm} \textit{or} \ |\hat{l}(t)| = m_l\ \textit{and}\ \! \mu(t) f(t)\hat{l}(t) \leq 0\\
            & \hspace{0mm} \textit{or} \ |\hat{l}(t)| = M_l \ \textit{and}\ \!\mu(t) f(t)\hat{l}(t) \geq 0\\
    0,              & \textit{if} \ |\hat{l}(t)| = m_l\ \textit{and}\ \! \mu(t) f(t)\hat{l}(t) > 0\\
            &  \textit{or} \ |\hat{l}(t)| = M_l \ \textit{and}\ \!\mu(t) f(t)\hat{l}(t) < 0
\end{cases}
\end{equation}
\begin{equation}\label{eq:Kupdt}
   \!\!\dot{\hat{K}}(t) \!= \!
\begin{cases}
    -\Gamma_K\mu(t) X(t),& \hspace{0mm} \textit{if } \|\hat{K}(t)\|_2 < M_K\\
            & \hspace{0mm} \textit{or} \ \|\hat{K}(t)\|_2 = M_K\ \textit{and}\ \mu(t) X^\top\!(t) \hat K(t) \geq 0\vspace{2mm} \\ 
    -\Gamma_K(I - \frac{\hat{K}(t)\hat{K}^\top(t)}{\hat{K}^\top\!(t)\hat{K}(t)})\mu(t) X(t),              & \textit{if} \ \|\hat{K}(t)\|_2 = M_K\ \textit{and}\ \mu(t) X^\top\!(t) \hat K(t) < 0
\end{cases}
\end{equation}
 \textit{then we obtain the following  provided that $\|\hat{K}(0)\|_2 \leq M_K$, $m_l \leq |\hat{l}(0)| \leq M_l$ and $E^\top\!(0) P E(0) < M^2$ hold:} 
\begin{enumerate} 
\item \textit{$\|\hat K(t)\|_2 \leq M_K$ and $ m_l \leq |\hat l(t)| \leq M_l$ for all $t \geq 0$,}
\item \textit{$\|X(t)\|_2 < M_x$ for all $t \geq 0$,}
\item \textit{the adaptive system in \eqref{eq:Edot} has bounded solutions and $E(t) \to 0$ as $t \to \infty$.}
\end{enumerate}
\end{theorem}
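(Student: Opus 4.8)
The plan is to run a barrier Lyapunov argument for the tracking error together with the standard properties of the two projection operators. Substituting the control \eqref{eq:inputs} into \eqref{eq:Edot} and using $B_m = B\lambda l$ from Assumption~\ref{as:match} gives the closed-loop error dynamics $\dot E = A_m E + B\lambda(\tilde K^\top X + \tilde l f)$; since $B\lambda = B_m/l = {\rm Sign}(l)\,B_m/|l|$, the parameter-error cross terms carry the unknown (possibly negative) factor $1/|l|$. Accordingly I would work with the candidate
\[
V = \frac{E^\top P E}{M^2 - E^\top P E} + \frac{1}{2\Gamma_K|l|}\tilde K^\top\tilde K + \frac{1}{2\Gamma_l|l|}\tilde l^2 =: V_b + V_\theta ,
\]
which is well defined and nonnegative as long as $E^\top P E < M^2$, and note that $E^\top(0)PE(0) < M^2$ is assumed. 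All claims are first proved on the maximal interval of existence $[0,T)$ of the closed loop (whose right-hand side is continuous, and locally Lipschitz on $\{E^\top P E < M^2\}$, so the solution is well posed there).

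First I would establish item~1 by the usual boundary check. Inspecting \eqref{eq:Kupdt}: whenever $\|\hat K(t)\|_2 = M_K$ one has $\tfrac{d}{dt}\|\hat K\|_2^2 = 2\hat K^\top\dot{\hat K}\le 0$ — it is identically $0$ on the tangential branch and equals $-2\Gamma_K\mu\,\hat K^\top X\le 0$ on the other boundary branch — so $\|\hat K(t)\|_2\le M_K$. Likewise \eqref{eq:Lupdt} keeps $m_l\le|\hat l(t)|\le M_l$, and since $|\hat l|\ge m_l>0$ with $\hat l$ continuous, the sign of $\hat l$ is frozen at ${\rm Sign}(\hat l(0))={\rm Sign}(l)$ (we initialise $\hat l(0)$ with the known sign of $l$). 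In particular $\tilde K,\tilde l$ are bounded on $[0,T)$.

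Next, differentiating $V_b$ along the error dynamics, using $A_m^\top P+PA_m=-Q$ from Assumption~\ref{as:Lyap} and recognising $\mu$ from \eqref{eq:Lambda}, one obtains
\[
\dot V_b = -\frac{M^2\,E^\top Q E}{(M^2 - E^\top P E)^2} + \frac{\mu}{|l|}\left(\tilde K^\top X + \tilde l f\right).
\]
Adding $\dot V_\theta$: on the interior (unprojected) branches of \eqref{eq:Kupdt}–\eqref{eq:Lupdt} the terms $\tfrac1{\Gamma_K|l|}\tilde K^\top\dot{\hat K}=-\tfrac{\mu}{|l|}\tilde K^\top X$ and $\tfrac1{\Gamma_l|l|}\tilde l\,\dot{\hat l}=-\tfrac{\mu}{|l|}\tilde l f$ cancel the cross terms exactly. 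On the tangential $\hat K$-branch the residue is $\tfrac{\mu}{|l|}\,\tfrac{(\tilde K^\top\hat K)(\hat K^\top X)}{\|\hat K\|_2^2}\le 0$, because $\hat K^\top\tilde K = M_K^2-\hat K^\top K\ge M_K^2-M_K\|K\|_2\ge 0$ (using $\|K\|_2\le M_K$) while $\mu\,\hat K^\top X<0$ there; on the frozen $\hat l$-branches the leftover $\tfrac{\mu}{|l|}\tilde l f\le 0$ by the sign bookkeeping $|l|\ge|\hat l|$, ${\rm Sign}(l)={\rm Sign}(\hat l)$ together with the activation condition $\mu f\hat l>0$ (resp.\ $<0$). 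Hence in every case
\[
\dot V \le -\frac{M^2\,E^\top Q E}{(M^2-E^\top P E)^2}\le -\frac{\lambda_{\min}(Q)}{M^2}\,\|E\|_2^2 \le 0 ,
\]
the middle inequality using $0<M^2-E^\top P E\le M^2$.

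From here the three conclusions follow. Since $V$ is nonincreasing, $V_b(t)\le V(0)<\infty$, which forces $E^\top P E(t)\le \tfrac{M^2 V(0)}{1+V(0)}<M^2$ on $[0,T)$; this keeps $\mu$ and the whole closed-loop right-hand side bounded, so the solution stays in a fixed compact subset of its domain and $T=\infty$. With $E^\top P E(t)<M^2$ for all $t\ge 0$, Lemma~\ref{lm:Mtoxm} yields $\|X(t)\|_2<M_x$ for all $t$, i.e.\ item~2 (item~1 being already shown). Finally, the bound on $V_b$ makes $E$ bounded, $\tilde K,\tilde l$ are bounded, hence $X=E+X_m$ and $u$ from \eqref{eq:inputs} are bounded, so $\dot E$ is bounded and $E$ is uniformly continuous; integrating the last display gives $\int_0^\infty\|E(s)\|_2^2\,ds<\infty$, so Barbalat's lemma gives $E(t)\to 0$, which is item~3. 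The computation of $\dot V_b$ and the projection boundary checks are routine; the points I expect to need the most care are (i) carrying the unknown, possibly negative, gain through via the $1/|l|$ weighting so the cross terms cancel, and (ii) verifying that the tangential $\hat K$-projection and the two-sided sign-aware $\hat l$-projection simultaneously preserve $\dot V\le 0$ and keep the barrier $V_b$ finite, which is what actually delivers the state constraint.
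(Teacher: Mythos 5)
Your proposal is correct and follows essentially the same route as the paper: the identical barrier Lyapunov function (your $1/|l|$ weight is exactly the paper's $\gamma={\rm Sign}(l)/l$), the same boundary checks for the two projection operators, the same conclusion that $E^\top PE<M^2$ keeps the barrier finite and hence $\|X\|_2<M_x$ via Lemma~\ref{lm:Mtoxm}, and Barbalat for $E\to 0$. Your explicit remark that $\hat l(0)$ must be initialised with the known sign of $l$ (so that ${\rm Sign}(\hat l)\equiv{\rm Sign}(l)$ and the frozen-branch inequality for $\tilde l$ holds) makes precise a condition the paper leaves implicit, and your quantitative bound $E^\top PE\le M^2V(0)/(1+V(0))$ replaces the paper's contradiction argument, but these are refinements of the same proof rather than a different approach.
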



\textit{(Proof of statement 1:)} Since we have $m_l  \leq |\hat{l}(0)| \leq M_l$, it is easy to see from \eqref{eq:Lupdt} that $m_l \leq |\hat l (t)| \leq M_l$ for all $t \geq 0$. Therefore, we only need to show that $\|\hat K (t)\|_2 \leq M_K$ for all $t \geq 0$. Consider the following positive semi-definite function $V_2(t)$:
\begin{equation}\label{eq:V2}
V_2(t) = \frac{1}{2}\hat{K}^\top(t)\hat{K}(t).
\end{equation}
Differentiating $V_2$ with respect to time and using \eqref{eq:Kupdt} we get
$$\dot{V}_2(t) = - \Gamma_K\mu(t) X^\top\!(t)(I - \frac{\hat{K}(t)\hat{K}^\top\!(t)}{\hat{K}^\top\!(t)\hat{K}(t)})\hat{K}(t) = 0$$ 
when $\|\hat{K}(t)\|_2 = M_K$ and $ \mu(t) X^\top\!(t) \hat K(t) < 0$. Further, when $\|\hat{K}(t)\|_2 = M_K$ and $ \mu(t) X^\top\!(t) \hat K(t) \geq 0$ we get
$$\dot{V}_2(t) = - \Gamma_K\mu(t) X^\top\!(t)\hat{K}(t)\leq 0.$$
From these and the definition of $V_2$, we can conclude that $\|\hat K(t)\|_2 \leq M_K$ for all $t \geq 0$ provided $\|\hat{K}(0)\|_2 \leq M_K$. This completes the proof of (1) in the theorem statement.
\qed

\textit{(Proof of statement 2:)} Define $\gamma = {\rm Sign}(l)/l$ and recall $M$ from Lemma \ref{lm:Mtoxm}. Consider the following barrier Lyapunov function:
\begin{equation}\label{eq:lyap}
V(t) = \frac{E^\top\!(t) P E(t)}{M^2 - E^\top\!(t) P E(t)} + \frac{1}{2 \Gamma_K}\tilde{K}^\top\!(t) \gamma\tilde{K}(t) + \frac{1}{2 \Gamma_l}\gamma\tilde{l}(t)^2,
\end{equation}
where $\Gamma_K$ and $\Gamma_l$ are positive constants and the positive definite matrix $P$ is as in Assumption \ref{as:Lyap}. Observe from the statement of this theorem that since  $E^\top\!(0) P E(0) < M^2$ and $\gamma, \Gamma_K, \Gamma_l > 0$, $V(0)$ is non negative. Further, if $E^\top\!(t) P E(t) < M^2$ for all $t \geq 0$ then $V$ is continuous in time $t$. Differentiating $V$ with respect to time $t$, we get
\begin{equation*}
\dot{V}(t) =  \frac{M^2 \frac{\dd }{\dd t}(E^\top\!(t) P E(t))}{(M^2 - E^\top\!(t) P E(t))^2} + \frac{\gamma}{{\Gamma_K}}\dot{\hat{K}}^\top\!(t) \tilde{K}(t)+ \frac{\gamma}{{\Gamma_l}}\tilde{l}(t)\dot{\hat{l}}(t).
\end{equation*}
Using \eqref{eq:Edot}, \eqref{eq:inputs}, the matching conditions in Assumption \ref{as:match} and the algebraic Lyapunov equation in Assumption \ref{as:Lyap}, we get the following after a simple calculation:
\begin{align}
&\dot{V}(t) =  \frac{-M^2 E^\top\!(t) Q E(t) }{(M^2 - E^\top\!(t) P E(t))^2}+\Big(\frac{\mu(t)f(t)}{l\ {\rm Sign}(l)}+\frac{\gamma}{{\Gamma_l}}\dot{\hat{l}}(t)\Big)\tilde{l}(t)\nonumber\\&+\Big(\frac{\mu(t)X^\top\!(t)}{l\ {\rm Sign}(l)} + \frac{\gamma}{{ \Gamma_K}}\dot{\hat{K}}^\top\!(t)\Big) \tilde{K}(t), \label{eq:Vdot2} 
\end{align}
where $\mu(t)$ is as defined in \eqref{eq:Lambda}. Substituting for the update law for $\hat{K}$ given in \eqref{eq:Kupdt}, we get the following:

\begin{equation*}
   \!\!\!\!\!\!\!\!\! \Big(\frac{\mu(t)X^\top\!(t)}{l\ {\rm Sign}(l)} + \frac{\gamma}{{ \Gamma_K}}\dot{\hat{K}}^\top\!(t)\Big) \tilde{K}(t)
\begin{cases}
    =0,& \text{if } \|\hat{K}(t)\|_2 < M_K\\
            & \text{or} \ \|\hat{K}(t)\|_2 = M_K\ \text{and}\ \mu(t) X^\top\!(t) \hat K(t) \geq 0\vspace{2mm} \\ 
    \leq 0,              & \text{if} \ \|\hat{K}(t)\|_2 = M_K\ \text{and}\ \mu(t) X^\top\!(t) \hat K(t) < 0
\end{cases}
\end{equation*}
provided that $\|{K}\|_2 \leq M_K$ and $\|\hat{K}(0)\|_2 \leq M_K$. Similarly, substituting for the update law for $\hat{l}$ given in \eqref{eq:Lupdt}, we get that
\begin{equation*}
  \!\!\!\!  \Big(\frac{\mu(t)f(t)}{l\ {\rm Sign}(l)}+\frac{\gamma}{{ \Gamma_l}}\dot{\hat{l}}(t)\Big)\tilde{l}(t) 
\begin{cases}
    =0,&  \text{if } 0<m_l <|\hat{l}(t)| < M_l\\
            &  \text{or} \ |\hat{l}(t)| = m_l\ \text{and}\ \! \mu(t) f(t)\hat{l}(t) \leq 0\\
            &  \text{or} \ |\hat{l}(t)| = M_l \ \text{and}\ \!\mu(t) f(t)\hat{l}(t) \geq 0\\
    \leq 0,              & \text{if} \ |\hat{l}(t)| = m_l\ \text{and}\ \! \mu(t) f(t)\hat{l}(t) > 0\\
            &  \text{or} \ |\hat{l}(t)| = M_l \ \text{and}\ \!\mu(t) f(t)\hat{l}(t) < 0
\end{cases}
\end{equation*}
provided that $m_l  \leq |{l}| \leq M_l$ and $m_l  \leq |\hat{l}(0)| \leq M_l$. Using these in \eqref{eq:Vdot2} and the fact that $Q$ is a positive definite matrix, we get
\begin{equation}\label{eq:finalVdot}
\dot{V}(t) \leq -\frac{M^2 E^\top\!(t) Q E(t) }{(M^2 - E^\top\!(t) P E(t))^2} \leq 0 \ \ \ \forall  \ t \geq 0
\end{equation}
since $\|\hat K (t)\|_2 \leq M_K$ and $m_l \leq |\hat l (t)| \leq M_l$ hold for all $t \geq 0$ (see statement 1 of Theorem \ref{th:state_cons}).

Observe from \eqref{eq:lyap} and \eqref{eq:finalVdot} that since $E^\top(0) P E(0) < M^2$ and $E$ is a continuous function of time $t$, $E^\top(t) P E(t) < M^2$ for all $t \geq 0$. Otherwise if $E^\top(t) P E(t) = M^2$ at some point of time $t$ then we get $V(t)$ at that time $t$ to be infinity. This along with $E^\top(0) P E(0) < M^2$, $\|K\|_2 \leq M_K$, $\|\hat{K}(0)\|_2 \leq M_K$, $m_l \leq |l| \leq M_l$ and $m_l  \leq |\hat{l}(0)| \leq M_l$ contradicts \eqref{eq:finalVdot}. Therefore, we obtain
\begin{equation}\label{eq:Ebound}
E^\top(t) P E(t) < M^2\FORALL t \geq 0
\end{equation}
which implies $\|X(t)\|_2 < M_x$ for all $t \geq 0$ (see Lemma \ref{lm:Mtoxm}) since the bounds in \eqref{eq:xmi_MXM} hold. This completes the proof of (2) in the theorem statement. 
\qed

\textit{(Proof of statement 3:)} Observe from \eqref{eq:lyap}, \eqref{eq:finalVdot} and \eqref{eq:Ebound} that $\dot{V}(t) = 0$ only if $E^\top\!(t) QE(t) = 0$. So, using Barbalat's Lemma, signal chasing analysis and the positive definiteness of the matrix $Q$ we can show that $E(t) \to 0$ asymptotically. Further, since $P$ is a constant positive definite matrix we get from \eqref{eq:Ebound} that the adaptive system in \eqref{eq:Edot} has bounded solutions. This completes the proof of statement (3) in the theorem. \qed


\section{MRAC under state and input constraints}\label{sec4}\setcounter{equation}{0} 

In this section, we propose a modification to the reference trajectory $f$ of the target system in \eqref{eq:target} (and thereby modification of $u$ in \eqref{eq:inputs}) to handle \emph{both} the input and state constraints present in \eqref{eq:state_constraint}-\eqref{eq:input_constraint}. In particular, this reference modification keeps the modified control input saturated at the boundary values $\pm M_u$ whenever the absolute value of the control input given in \eqref{eq:inputs} exceeds the bound $M_u$. We also provide a verifiable condition (see \eqref{eq:stability_con}) that guarantees the stability of the closed-loop system dynamics of the plant in \eqref{eq:sys} along with the corresponding modified input in the presence of constraints.

Recall the plant state $X$ from \eqref{eq:sys}, $l$ from Assumption \ref{as:match}, $M_x$ from \eqref{eq:state_constraint} and the gains $\hat{K}$ and $\hat{l}$ from \eqref{eq:inputs}. We first present the assumptions under which we prove our main theorem in this section.
\begin{assumption}\label{as:target}
\textit{The matrices $A_m$ and $B_m$ and the initial state $X_{m_0}$ in the target system in \eqref{eq:target} are designed in such a way that for any given positive constant $M_{x_m}$ satisfying $M_{x_m} < M_x$ there exists a positive constant $f_M$ such that for any reference trajectory $f$ satisfying $|f(t)|\leq f_M$ for all $t \geq 0$, we have $\|X_m(t)\|_2 \leq M_{x_m} < M_x$ for all $t \geq 0$.} 
\end{assumption}
\begin{assumption}\label{as:stability_cond}
\textit{Let $f_M$ be as in Assumption \ref{as:target}. The following inequality holds for all $t \geq 0$:
\begin{equation}\label{eq:stability_con}
M_u \geq |\hat{K}^\top\!(t) X(t)|-\hat{l}(t){\rm Sign}(l)f_M
\end{equation}}
\end{assumption}

Assumption \ref{as:target} is without loss of generality since the existence of a positive constant $f_M$ for any given $M_{x_m}$, such that Assumption \ref{as:target} holds, is always guaranteed for any exponentially stable system (which is the case for our target system). This is because bounded inputs will result in bounded states for such systems. Furthermore, using tools like Lyapunov analysis, it is also easy to find a $f_M$ for a given $M_{x_m}$ such that Assumption \ref{as:target} holds. Assumption \ref{as:stability_cond} guarantees the stability of the closed-loop system dynamics of the plant in \eqref{eq:sys} along with the corresponding modified input in the presence of constraints. Assumption \ref{as:stability_cond} belongs to a class of assumptions standard in constrained MRAC literature (see Remark \ref{rm:Naira_diff}) and can be verified online in our case unlike existing research.

\begin{remark}\label{rm:Naira_diff}
Modified MRACs that can handle state or actuator constraints require additional conditions (in terms of the constraint parameters) to ensure the stability of the closed-loop system dynamics of the plant. Literature in constrained MRAC (see \cite{LaHo:07, KaAn:93, WaLiYaTi:22} and references therein) are invariably accompanied by stability conditions such as Assumption \ref{as:stability_cond} (which provides a relation between state and control bounds). However, the stability conditions presented in \cite{LaHo:07, KaAn:93, WaLiYaTi:22} etc., are complicated and difficult to verify due to the presence of \emph{true values} of the unknown parameters in those conditions. In a significant improvement over the existing literature on constrained MRACs, all terms on the right-hand side of \eqref{eq:stability_con} are either measurable or known.
Further, with information on bounds of $\|\hat K\|_2$, $|\hat l|$ and $\|X\|_2$, it is possible to verify \eqref{eq:stability_con} in Assumption \ref{as:stability_cond} even before implementing our control algorithm (see Remark \ref{rm:tighter_bounds}).
\end{remark}

Consider the following modified target system obtained from \eqref{eq:target}:
\begin{equation}\label{eq:modtarget}
\dot{X}_m^s(t) = A_mX_m^s(t)+B_m(f(t)+g(t)),\ X_m^s(0) = X_{m_0},
\end{equation}
where $X_m^s \in \rline^n$ is the state of the modified target system, $g(t) \in \rline$ is an additive reference modification term and the continuous bounded reference input $f$ and the matrices $A_m$ and $B_m$ are as in Section \ref{sec2}. A desirable characteristic for $g$ is that it should be continuous, uniformly bounded and $g(t) = 0$ whenever $|u(t)| \leq M_u$. This is to prevent any modification of the original reference trajectory $f(t)$ when the input constraint is already satisfied. 

Denote the modified tracking error of the system state $X$ in \eqref{eq:sys} at time $t$ as 
\begin{equation}\label{eq:Es}
E^s(t) = X(t)-X_m^s(t).
\end{equation}
Recall $M_x$ and suppose that Assumption \ref{as:target} holds. Fix an $M_{x_m}$ satisfying $M_{x_m} < M_x$ and let $f_M$ be as in Assumption \ref{as:target}. Then we define $M_e = M_x - M_{x_m}$ and $M = M_e \sqrt{\lambda_{\rm min}(P)}$, where $P$  is as in Assumption \ref{as:Lyap}. Let $\mu^s(t)$ be defined as
\begin{equation}\label{eq:Lambda_s}
\mu^s(t) = \frac{2M^2{E^s}^\top(t) P B_m{\rm Sign}(l)}{(M^2 - {E^s}^\top(t) P E^s(t))^2}.
\end{equation}
Now, we present a useful intermediate result. 
\begin{lemma}\label{lm:sec4}
\textit{Suppose Assumptions \ref{as:match}, \ref{as:Lyap} and \ref{as:target} hold, and the modified reference trajectory $f+g$ in \eqref{eq:modtarget} is continuous and satisfies $|f(t)+g(t)|\leq f_M$. {Let $\Gamma_K$ and $\Gamma_l$ be some positive constants.} If the input to the system in \eqref{eq:sys} is given as
\begin{equation}\label{eq:mod_input}
u^s(t) =  {\hat{K}}^\top\!(t) X(t) + \hat l(t) f(t) + \hat l(t) g(t).
\end{equation}
where the projection-based update laws for $\hat K$ and $\hat l$ are given as 
\begin{equation}\label{eq:Ksupdt}
   \!\! \dot{\hat{K}}(t) \!= \!
\begin{cases}
    -{ \Gamma_K}\mu^s(t) X(t),& \hspace{0mm} \text{if } \|\hat{K}(t)\|_2 < M_K\\
       & \hspace{0mm} \text{or} \ \|\hat{K}(t)\|_2 = M_K\ \text{and}\ \mu^s(t) X^\top\!(t) \hat K(t) \geq 0\vspace{2mm} \\ 
    -{ \Gamma_K}(I - \frac{\hat{K}(t)\hat{K}^\top(t)}{\hat{K}^\top\!(t)\hat{K}(t)})\mu^s(t) X(t),         & \text{if} \ \|\hat{K}(t)\|_2 = M_K\ \text{and}\ \mu^s(t) X^\top\!(t) \hat K(t) < 0
\end{cases}
\end{equation}
\begin{equation}\label{eq:Lsupdt}
  \!\! \dot{\hat{l}}(t) \!=\! 
\begin{cases}
    -{ \Gamma_l}\mu^s(t) (f+g)(t),& \hspace{0mm} \text{if } 0<m_l <|\hat{l}(t)| < M_l\\
            & \hspace{0mm} \text{or} \ |\hat{l}(t)| = m_l\ \text{and}\ \! (\mu^s (f+g)\hat{l})(t) \leq 0\\
            & \hspace{0mm} \text{or} \ |\hat{l}(t)| = M_l \ \text{and}\ \!(\mu^s (f+g)\hat{l})(t) \geq 0\\
    0,              & \hspace{0mm} \text{if} \ |\hat{l}(t)| = m_l\ \text{and}\ \! (\mu^s (f+g)\hat{l})(t) > 0\\
            & \hspace{0mm} \text{or} \ |\hat{l}(t)| = M_l \ \text{and}\ \!(\mu^s (f+g)\hat{l})(t) < 0
\end{cases}
\end{equation}
then we obtain the following provided that $\|\hat{K}(0)\|_2 \leq M_K$, $m_l \leq |\hat{l}(0)| \leq M_l$ and ${E^s}^\top\!(0) P E^s(0) < M^2$ hold: 
\begin{enumerate} 
\item $\|\hat K(t)\|_2 \leq M_K$ and $ m_l \leq |\hat l(t)| \leq M_l$ for all $t \geq 0$,
\item $\|X(t)\|_2 < M_x$ for all $t \geq 0$,
\item $E^s(t) \to 0$ as $t \to \infty$.
\end{enumerate}}
\end{lemma}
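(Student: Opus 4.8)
The plan is to repeat the argument of Theorem~\ref{th:state_cons} almost verbatim, observing that the modified target system \eqref{eq:modtarget} driven by the modified input \eqref{eq:mod_input} has exactly the structure of \eqref{eq:target}--\eqref{eq:inputs} once the reference $f$ is replaced by $f+g$ (note that $u^s(t) = \hat K^\top(t)X(t) + \hat l(t)(f(t)+g(t))$). Concretely, I would first subtract \eqref{eq:modtarget} from \eqref{eq:sys} and use the matching conditions of Assumption~\ref{as:match} to obtain the modified error dynamics
\[
\dot{E}^s(t) = A_m E^s(t) - B\lambda K^\top X(t) + B\lambda u^s(t) - B_m\bigl(f(t)+g(t)\bigr),
\]
which is \eqref{eq:Edot} with $f \mapsto f+g$. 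Statement~1 then follows exactly as the proof of statement~1 of Theorem~\ref{th:state_cons}: the projection in \eqref{eq:Lsupdt} keeps $|\hat l(t)| \in [m_l, M_l]$ for all $t \ge 0$ whenever $|\hat l(0)| \in [m_l,M_l]$, and differentiating $V_2(t) = \tfrac12 \hat K^\top(t)\hat K(t)$ in \eqref{eq:V2} along \eqref{eq:Ksupdt} gives $\dot V_2(t) \le 0$ on the boundary $\|\hat K(t)\|_2 = M_K$, so $\|\hat K(t)\|_2 \le M_K$ for all $t$ whenever $\|\hat K(0)\|_2 \le M_K$.

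For statement~2 I would use the barrier Lyapunov function
\[
V^s(t) = \frac{{E^s}^\top(t) P E^s(t)}{M^2 - {E^s}^\top(t) P E^s(t)} + \frac{1}{2\Gamma_K}\tilde K^\top(t)\gamma \tilde K(t) + \frac{1}{2\Gamma_l}\gamma\tilde l(t)^2,
\]
with $\gamma = {\rm Sign}(l)/l > 0$. Differentiating $V^s$ along the modified error dynamics and substituting \eqref{eq:Ksupdt}--\eqref{eq:Lsupdt} --- the reference enters only through the combination $\mu^s(t)(f+g)(t)$ (with $\mu^s$ as in \eqref{eq:Lambda_s}), which is cancelled term-by-term by the $\hat l$ update exactly as in the proof of Theorem~\ref{th:state_cons} --- and using the projection bounds from statement~1 together with $Q \succ 0$ yields
\[
\dot V^s(t) \le -\frac{M^2\,{E^s}^\top(t) Q E^s(t)}{\bigl(M^2 - {E^s}^\top(t) P E^s(t)\bigr)^2} \le 0 \qquad \forall\, t \ge 0.
\]
Since ${E^s}^\top(0) P E^s(0) < M^2$ and $E^s$ is continuous, $V^s$ cannot escape to $+\infty$, hence ${E^s}^\top(t) P E^s(t) < M^2$ for all $t \ge 0$. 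Because the modified reference satisfies $|f(t)+g(t)| \le f_M$ by hypothesis, Assumption~\ref{as:target} gives $\|X_m^s(t)\|_2 \le M_{x_m} < M_x$ for all $t$; Lemma~\ref{lm:Mtoxm}, whose proof uses only the bound $\|X_m(t)\|_2 \le M_{x_m}$ and the definitions $M_e = M_x - M_{x_m}$, $M = M_e\sqrt{\lambda_{\min}(P)}$, carries over verbatim with $X_m^s, E^s$ in place of $X_m, E$ and therefore gives $\|X(t)\|_2 < M_x$ for all $t \ge 0$.

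Finally, statement~3: from ${E^s}^\top(t) P E^s(t) < M^2$, the projection bounds, and $|f+g| \le f_M$, all closed-loop signals are bounded, and $\dot V^s(t) = 0$ only when ${E^s}^\top(t) Q E^s(t) = 0$; since $Q \succ 0$, Barbalat's lemma together with the usual signal-chasing analysis gives $E^s(t) \to 0$ as $t \to \infty$. I do not expect a genuine obstacle in the proof itself: the only point requiring care is that this lemma is not literally a corollary of Theorem~\ref{th:state_cons}, because $g$ is not an externally prescribed bounded reference but is produced inside the loop, so the hypotheses ``$f+g$ continuous and $|f+g| \le f_M$'' are precisely what is needed to make Assumption~\ref{as:target} applicable and the barrier argument self-consistent; establishing that these hypotheses can in fact be met by an admissible modification $g$ is deferred to the main theorem of this section, where Assumption~\ref{as:stability_cond} enters.
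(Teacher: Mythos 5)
Your proposal is correct and follows essentially the same route as the paper: the authors likewise prove statements 1 and 3 by mimicking the proof of Theorem \ref{th:state_cons} with $f$ replaced by $f+g$, obtain ${E^s}^\top(t) P E^s(t) < M^2$ from the barrier Lyapunov argument, and then combine $\|E^s(t)\|_2 < M_e$ with the bound $\|X_m^s(t)\|_2 \leq M_{x_m}$ supplied by Assumption \ref{as:target} to conclude $\|X(t)\|_2 < M_x$. Your closing observation --- that the hypotheses on $f+g$ are exactly what defers the feasibility of $g$ to Theorem \ref{th:state_ip_cons} --- accurately reflects the division of labor in the paper.
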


\begin{proof}
Mimicking the steps followed in the proof of Theorem \ref{th:state_cons}, we can prove statements 1 and 3 of this lemma. Further, we can obtain that ${E^s}^\top P E^s < M^2$ for all $t \geq 0$. This implies using the definition of $M$ that  
\begin{equation}\label{eq:sec4lm1}
\|E^s(t)\|_2 < M_e \ \ \forall \ t \geq 0.
\end{equation}
Again, since Assumption \ref{as:target} holds and the modified reference trajectory $f(t)+g(t)$ in \eqref{eq:modtarget} satisfies $|f(t)+g(t)|\leq f_M$, we get from Assumption \ref{as:target} that 
\begin{equation}\label{eq:sec4lm2}
\|X^s_{m}(t)\|_2 \leq M_{x_m} < M_x\ \ \forall \ t \geq 0.
\end{equation}
Now it is easy to see from \eqref{eq:Es}, \eqref{eq:sec4lm1}, \eqref{eq:sec4lm2} and the definition of $M_e$ present below \eqref{eq:Es} that 
$\|X(t)\|_2 < M_x$ for all $t \geq 0$ (for further details see Lemma \ref{lm:Mtoxm}). This completes the proof of statement 2 of the lemma.
\end{proof}



Now, we state our main theorem.

\begin{theorem}\label{th:state_ip_cons}
\textit{Let Assumptions \ref{as:match} and \ref{as:Lyap} hold. Let $A_m$, $B_m$, $X_{m_0}$, $f_M$, $M_{x_m}$ and $M_u$ be such that Assumptions \ref{as:target} and \ref{as:stability_cond} hold. Suppose $f$ in \eqref{eq:modtarget} is continuous and satisfies $|f(t)| \leq f_M$ for all $t \geq 0$. Let the input to the system in \eqref{eq:sys} be given by $u^s$ in \eqref{eq:mod_input} where the update laws for $\hat K$ and $\hat l$ are given in \eqref{eq:Ksupdt}-\eqref{eq:Lsupdt} and $g$ is as follows:}
\begin{equation}\label{eq:g}
  \!\! g(t) \!=\! 
\begin{cases}
    0,&  \textit{if } |u(t)| < M_u\\
          \frac{M_u-u(t)}{\hat{l}(t)},  &  \textit{if } u(t) \geq M_u\\
           
          \frac{-M_u-u(t)}{\hat{l}(t)}  &  \textit{if } u(t) \leq -M_u\\
\end{cases}
\end{equation}
\textit{where} 
\begin{equation}\label{eq:thp:u}
u(t) = \hat{K}^\top\!(t) X(t) + \hat l(t) f(t).
\end{equation}
\textit{Then the following holds provided that $\|\hat{K}(0)\|_2 \leq M_K$, $m_l  \leq |\hat{l}(0)| \leq M_l$ and ${E^s}^\top(0) P E^s(0) < M^2$:}
\begin{enumerate}
\item \textit{$|f(t)+g(t)|\leq f_M$ for all $t \geq 0$,}
\item \textit{$\|\hat K(t)\|_2 \leq M_K$ and $ m_l \leq |\hat l(t)| \leq M_l$ for all $t \geq 0$,}
\item \textit{$X(t) \to X_m^s(t)$ as $t \to \infty$,}
\item  \textit{$\|X(t)\|_2 < M_x$ and $|u^s(t)| \leq M_u$ for all $t \geq 0$.}
\end{enumerate}
\textit{Furthermore, if $M_u$ goes to $\infty$, we get that $\limsup_{t \to \infty} \|\tilde E(t)\|_2 \to 0$, where $\tilde E(t) = E(t)-E^s(t)$ and $E(t)$ is as in \eqref{eq:E}.}
\end{theorem}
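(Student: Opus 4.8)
The plan is to prove the four enumerated claims in order, then the supplementary $M_u\to\infty$ statement.

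\textbf{Strategy for claims 1--4.} First I would establish claim 1 directly from the definition of $g$ in \eqref{eq:g}. When $|u(t)| < M_u$ we have $g(t)=0$, so $|f(t)+g(t)| = |f(t)| \le f_M$ trivially. When $u(t) \ge M_u$, substituting \eqref{eq:thp:u} into \eqref{eq:g} gives $\hat l(t)(f(t)+g(t)) = \hat l(t) f(t) + M_u - u(t) = M_u - \hat K^\top(t) X(t)$, hence $f(t)+g(t) = (M_u - \hat K^\top(t)X(t))/\hat l(t)$; similarly $f(t)+g(t) = (-M_u - \hat K^\top(t)X(t))/\hat l(t)$ when $u(t)\le -M_u$. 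In either saturated case, using ${\rm Sign}(l) = \hat l(t)/|\hat l(t)|$ does not immediately bound this, so I would instead multiply through by ${\rm Sign}(l)$ and invoke Assumption~\ref{as:stability_cond}: the inequality \eqref{eq:stability_con}, $M_u \ge |\hat K^\top(t)X(t)| - \hat l(t){\rm Sign}(l) f_M$, rearranges to $\pm M_u \mp \hat K^\top(t)X(t)$ having absolute value at most $|\hat l(t)| f_M$ — more precisely one checks that $|(\pm M_u - \hat K^\top(t)X(t))/\hat l(t)| \le f_M$ in the relevant saturated branch, using that the sign of $g$ matches the sign needed so that $f+g$ is pulled back toward the feasible band. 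This is the one genuinely delicate bookkeeping step and the main obstacle: one must carefully track the signs of $\hat l(t)$, $u(t)$, and ${\rm Sign}(l)$ so that \eqref{eq:stability_con} yields exactly $|f+g|\le f_M$ in both saturation cases. Once claim 1 holds, $f+g$ is a continuous (by continuity of $u$, $\hat l$ bounded away from $0$, and the matching of branches at $|u|=M_u$) bounded reference with $|f+g|\le f_M$, so claims 2 and 3 follow immediately by applying Lemma~\ref{lm:sec4} with $\mu^s$, the update laws \eqref{eq:Ksupdt}--\eqref{eq:Lsupdt}, and the given initial conditions. For claim 4, the bound $\|X(t)\|_2 < M_x$ is again Lemma~\ref{lm:sec4}(2); the bound $|u^s(t)|\le M_u$ is obtained by noting from \eqref{eq:mod_input} and \eqref{eq:thp:u} that $u^s(t) = u(t) + \hat l(t) g(t)$, which equals $u(t)$ when unsaturated (so $|u^s| = |u| < M_u$) and equals exactly $\pm M_u$ in the two saturated branches by construction of $g$.

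\textbf{Strategy for the $M_u\to\infty$ statement.} Here I would subtract the modified error dynamics from the original ones. Define $\tilde E = E - E^s = X_m^s - X_m$. Subtracting \eqref{eq:target} from \eqref{eq:modtarget} gives the clean linear dynamics
\begin{equation}\label{eq:tildeEdyn}
\dot{\tilde E}(t) = A_m \tilde E(t) + B_m g(t), \qquad \tilde E(0) = 0,
\end{equation}
since both share the initial condition $X_{m_0}$. Because $A_m$ is Hurwitz (Assumption~\ref{as:Lyap}), this is a stable LTI system driven by $g$, so $\limsup_{t\to\infty}\|\tilde E(t)\|_2$ is controlled by $\limsup_{t\to\infty}|g(t)|$ via the BIBO gain of $(sI-A_m)^{-1}B_m$ (or a Lyapunov estimate using $P$). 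It then remains to argue that $|g(t)|\to 0$ as $M_u\to\infty$. From \eqref{eq:g}, $g(t)$ is nonzero only on the saturation set $\{t : |u(t)| \ge M_u\}$, and on that set $|g(t)| = (|u(t)| - M_u)/|\hat l(t)| \le (|u(t)| - M_u)/m_l$. Using $u(t) = \hat K^\top(t) X(t) + \hat l(t) f(t)$ together with the bounds $\|\hat K(t)\|_2 \le M_K$, $|\hat l(t)| \le M_l$ (claim 2) and $\|X(t)\|_2 < M_x$ (claim 4) and $|f(t)|\le f_M$, we get $|u(t)| \le M_K M_x + M_l f_M =: \bar u$, a bound independent of $M_u$. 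Hence for $M_u > \bar u$ the saturation set is empty, $g \equiv 0$, $\tilde E \equiv 0$, and in particular $\limsup_{t\to\infty}\|\tilde E(t)\|_2 = 0$. I would present this cleanly: for all sufficiently large $M_u$ the reference modification is inactive, so the claim is immediate; one may also phrase it as a limit, noting $\|\tilde E\|_\infty \to 0$ monotonically once $M_u \ge \bar u$. The only subtlety is making sure the bounds $\|X\|_2 < M_x$ and $\|\hat K\|_2 \le M_K$, $|\hat l|\le M_l$ used to bound $|u|$ are themselves uniform in $M_u$ — which they are, since $M_x$, $M_K$, $M_l$, $f_M$ are fixed constants not depending on $M_u$. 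This step I expect to be routine once claims 1--4 are in hand; the real work is the sign analysis in claim 1.
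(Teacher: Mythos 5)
Your proposal is correct and follows essentially the same route as the paper: claim 1 via the same sign analysis that combines the saturation inequality $u(t)\gtrless \pm M_u$ (for one side of the bound) with Assumption \ref{as:stability_cond} (for the other), claims 2--4 by invoking Lemma \ref{lm:sec4} together with the explicit saturated form of $u^s$, and the final claim via $\dot{\tilde E}=A_m\tilde E+B_m g$ with the uniform bound $|u(t)|\le M_K M_x+M_l f_M$. The only (harmless) difference is in the last step, where you observe that $g\equiv 0$ outright once $M_u$ exceeds that bound, whereas the paper bounds $\sup_t |g(t)|$ by $\min(2f_M,\max((C-M_u)/m_l,0))$ and lets it tend to zero as $M_u\to\infty$.
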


\textit{(Proof of statement 1) :} Note from \eqref{eq:Lsupdt} that $|\hat{l}|$ is lower bounded by a positive scalar $m_L$ and is therefore never $0$. Using this and \eqref{eq:g} we get that $g$ is a continuous function of $t$ and  
\begin{align*}
f(t)+g(t) &= f(t), \hspace{24mm} {\rm if} \ |u(t)| <M_{u}\\
&= \frac{M_{u}- \hat{K}^\top\!(t) X(t)}{\hat{l}(t)}, \ \ \  {\rm if} \ \ u(t) \ \geq M_{u}\\
&= \frac{-M_{u}- \hat{K}^\top\!(t) X(t)}{\hat{l}(t)}, \  {\rm if} \ \ u(t)\ \leq -M_{u}.
\end{align*} 
Since we have by design $|f(t)| \leq f_M$ for all $t \geq 0$, we get that $|f(t)+g(t)|\leq f_M$ whenever $|u(t)| < M_u$. Therefore, we need to prove the same only for those cases when $|u(t)| \geq M_u$. Suppose the sign of $l$ is positive. Then $\hat l \geq m_l > 0$ and the following holds: 
\begin{enumerate}
\item Suppose $u(t) \geq M_u$. Then we have $\hat{K}^\top\!(t) X(t) + \hat{l}(t)f(t) \geq M_u$ which implies $M_u - \hat{K}^\top\!(t) X(t) \leq \hat l(t) f(t) \leq \hat l(t) f_M$. This further implies $f(t)+g(t)\leq f_M$ whenever $u(t) \geq M_u$ and $\hat l > 0$.
\item  From \eqref{eq:stability_con}, we get $M_u \geq \hat{K}^\top\!(t) X(t)-\hat{l}(t)f_M$. This implies $f(t)+g(t)\geq -f_M$ whenever $u(t) \geq M_u$ and $\hat l > 0$. Again  from \eqref{eq:stability_con}, we get $M_u \geq -\hat{K}^\top\!(t) X(t)-\hat{l}(t)f_M$ which implies $f(t)+g(t)\leq f_M$ whenever $u(t) \leq -M_u$ and $\hat l > 0$.
\item Suppose $u(t) \leq -M_u$. Then we have $\hat{K}^\top\!(t) X(t) + \hat{l}(t)f(t) \leq -M_u$ which implies $-M_u - \hat{K}^\top\!(t) X(t) \geq \hat l(t) f(t) \geq -\hat l(t) f_M$. This further implies $f(t)+g(t)\geq -f_M$ whenever $u(t) \leq -M_u$ and $\hat l > 0$.
\end{enumerate}

For the case when ${\rm Sign}(l) < 0$ similar arguments as above can be used to claim that $|f(t)+g(t)|\leq f_M$.
From these results, we get that $|f(t)+g(t)|\leq f_M$ for all $t \geq 0$. This completes the proof of statement 1 of the theorem.
\qed

\textit{(Proof of statements 2,3 and 4) :} Mimicking the steps followed in the proof of Theorem \ref{th:state_cons}, we can easily prove statements (2) and (3) of this theorem and that ${E^s}^\top(t)PE^s(t) < M^2$ for all $t \geq 0$ since $\|K\|_2 < M_K$, $\|\hat{K}(0)\|_2 \leq M_K$, $m_l < |l| < M_l$, $m_l  \leq |\hat{l}(0)| \leq M_l$ and ${E^s}^\top(0) P E^s(0) < M^2$ hold. Since we have from statement (1) of Theorem \ref{th:state_ip_cons} that $|f(t)+g(t)|\leq f_M$ for all $t \geq 0$ and $f+g$ is a continuous function of time $t$, we get from Lemma \ref{lm:sec4} that $\|X(t)\|_2 < M_x$ for all $t \geq 0$.



Using \eqref{eq:mod_input}, \eqref{eq:g} and the expression of $u(t)$ given in \eqref{eq:thp:u}, we get 
\begin{equation}\label{eq:input_sat}
u^s(t) = 
\begin{cases}
    u(t),& \hspace{-1mm} \text{if } \ |u(t)|<M_{u} \\
    M_u {\rm Sign}(u(t)),  & \hspace{-1mm} \text{if } \ |u(t)|\geq M_{u}. 
\end{cases}
\end{equation} 
This implies that $|u^s(t)| \leq M_u$ for all $t \geq 0$. This completes the proof of (4) in the theorem statement. \qed

\textit{(Proof of last statement) :} Using \eqref{eq:target}, \eqref{eq:modtarget} and the definition of $\tilde E$ we get that
\begin{equation}\label{eq:Endot}
\dot{\tilde E}(t) = A_m\tilde E(t)+B_mg(t).
\end{equation} 
Observe from the expression of $u$ in \eqref{eq:thp:u} that $|u(t)| \leq M_KM_x+M_lf_M < C$ for some constant $C>0$ and for all $t \geq 0$. Clearly, from this and \eqref{eq:g} we get that $|g(t)| \leq \min(2f_M,\max((C-M_u)/m_l,0)) < \infty$ for all $t \geq 0$. This implies that since $C, m_l$ and $f_M$ are fixed positive constants, $\sup_{t\in[0,\infty)}|g(t)|$ goes to 0 as $M_u$ goes to infinity. This along with \eqref{eq:Endot} further implies that if $M_u$ goes to $\infty$, we get that $\limsup_{t \to \infty} \|\tilde E(t)\|_2 \to 0$. This completes the proof of the theorem. \qed

\begin{remark}\label{rm:diff_not}
 In the absence of the reference trajectory modification proposed in \eqref{eq:g}, we can still obtain a lower bound of $M_u$ directly from \eqref{eq:inputs}. In particular, we can obtain $M_u \geq $$|\hat{K}^\top\!(t) X(t)+\hat{l}(t)f(t)|$ uniformly for all time, $t$ from \eqref{eq:inputs}. Clearly, the lower bound on $M_u$ obtained using this approach is greater than the lower bound for $M_u$ given in \eqref{eq:stability_con}. Therefore, the class of reference trajectories and constraints that we can handle using our proposed method is larger than the class that can be handled without using the reference modification.
\end{remark}

\begin{remark}\label{rm:tighter_bounds}
Observe in Theorem \ref{th:state_ip_cons} that $\|\hat{K}(t)\|_2 \leq M_K$, $m_l \leq |\hat l(t)|$ and $\|X(t)\|_2 \leq M_x$ for all $t\geq 0$. We can use these bounds to verify Assumption \ref{as:stability_cond} prior to implementing our control algorithm. In particular, Assumption \ref{as:stability_cond} can be verified by checking whether $M_u \geq  M_K M_x - m_lf_M$. Verifying Assumption \ref{as:stability_cond} using these bounds is more conservative than online verification. Bounds as obtained using $M_u \geq  M_K M_x - m_lf_M$ can however be used by a system designer for actuator specifications.
\end{remark}

\subsection{Effect of an additional nonlinear term in \eqref{eq:sys}}\label{sec4.1}

Consider a system whose dynamics are governed by the following equation: $\forall \ t \geq 0$
\begin{equation}
\label{eq:non_Sys}
\dot{X}(t) = AX(t) +A_1 \Phi(X(t)) + B\lambda u(t), \ \ X(0)=X_0,
\end{equation}
where $A$, $B$, $\lambda$ and $u$ are as in Section \ref{sec2}, $A_1 \in \rline^{n\times n}$ is unknown and $\Phi(X(t)) \in \rline^n$ is a measurable vector whose each element is a globally Lipschitz continuous function of the state vector $X$. In this section, the goal is to address Problem \ref{prob} for the system in \eqref{eq:non_Sys}. First, we consider the following assumptions:
\begin{assumption}\label{as:match_non} \textit{There exist a vector $K_1\in\rline^n$ such that $A_1 = -B\lambda K_1^\top$ holds. Further, there exists  a known positive scalar $M_{K_1}$ such that $K_1$ satisfies $\|K_1\|_2 \leq M_{K_1}$.}
\end{assumption}

\begin{assumption}\label{as:stablity_non}
\textit{Let $f_M$ be as in Assumption \ref{as:target}. The following inequality holds for all $t \geq 0$:
\begin{equation}\label{eq:stability_con_non}
M_u \geq |\hat{K}^\top\!(t) X(t)+\hat{K_1}^\top\!(t) \Phi(X(t))|-\hat{l}(t){\rm Sign}(l)f_M
\end{equation}}
\end{assumption}
Let the modified tracking error of the system state $X$ in \eqref{eq:non_Sys} at time $t$ be denoted as 
\begin{equation}\label{eq:Es_non}
E^s(t) = X(t)-X_m^s(t)
\end{equation}
where $X_m^s$ is as given in \eqref{eq:modtarget}. Recall $M_x$ in \eqref{eq:state_constraint} and $M_u$ in \eqref{eq:input_constraint}. Let $M_{x_m}$, $M_e$, $M$ and $\mu^s$ be defined as above Lemma \ref{lm:sec4} in Section \ref{sec4}. Then we get the following result:
\begin{theorem}\label{th:state_ip_cons_non}
\textit{Let Assumptions \ref{as:match}, \ref{as:Lyap} and \ref{as:match_non} hold. Let $A_m$, $B_m$, $X_{m_0}$, $f_M$, $M_{x_m}$ and $M_u$ be such that Assumptions \ref{as:target} and \ref{as:stablity_non} hold. Suppose $f$ in \eqref{eq:modtarget} is continuous and satisfies $|f(t)| \leq f_M$ for all $t \geq 0$. Let the input to the system in \eqref{eq:non_Sys} be given by 
$$u^s(t) =  {\hat{K}}^\top\!(t) X(t) + {\hat{K_1}}^\top\!(t) \Phi(X(t))+ \hat l(t) f(t) + \hat l(t) g(t)$$ 
where the update laws for $\hat K$ and $\hat l$ are given in \eqref{eq:Ksupdt}-\eqref{eq:Lsupdt}, the update law for $\hat K_1$ is given as }
\begin{equation}\label{eq:K1supdt}
   \!\!\!\!\!\!\!\!\! \dot{\hat{K_1}}(t) \!= \!
\begin{cases}
    -{ \Gamma_{K_1}}\mu^s(t) \Phi(X(t)),& \hspace{-0mm} \textit{if } \|\hat{K}_1(t)\|_2 < M_{K_1}\\
            & \hspace{-0mm} \textit{or} \ \|\hat{K}_1(t)\|_2 = M_{K_1}\ \\ 
            &\hspace{-0mm}\textit{and}\ \mu^s(t) \Phi(X(t))^\top \hat K_1(t) \geq 0\vspace{2mm} \\ 
    -{ \Gamma_{K_1}}(I - \frac{\hat{K}_1(t)\hat{K}_1^\top(t)}{\hat{K}_1^\top\!(t)\hat{K}_1(t)})\mu^s(t) \Phi(X(t)),              & \textit{if} \ \|\hat{K}_1(t)\|_2 = M_{K_1}\ \\ 
            &\hspace{-0mm}\textit{and}\ \mu^s(t) \Phi(X(t))^\top \hat K_1(t) < 0
\end{cases}
\end{equation}
\textit{ { for some positive constant $\Gamma_{K_1}$} and $g$ is as follows:}
\begin{equation}\label{eq:g_non}
  \!\! g(t) \!=\! 
\begin{cases}
    0,&  \textit{if } |u(t)| < M_u\\
          \frac{M_u-u(t)}{\hat{l}(t)},  &  \textit{if } u(t) \geq M_u\\
           
          \frac{-M_u-u(t)}{\hat{l}(t)}  &  \textit{if } u(t) \leq -M_u\\
\end{cases}
\end{equation}
\textit{ where} 
\begin{equation}\label{eq:thp:u_non}
u(t) = \hat{K}^\top\!(t) X(t)+ {\hat{K_1}}^\top\!(t) \Phi(X(t)) + \hat l(t) f(t).
\end{equation}
\textit{ Then the following holds provided that $\|\hat{K}(0)\|_2 \leq M_K$, $\|\hat{K}_1(0)\|_2 \leq M_{K_1}$, $m_l  \leq |\hat{l}(0)| \leq M_l$ and ${E^s}^\top(0) P E^s(0) < M^2$: }
\begin{enumerate}
 \item \textit{$|f(t)+g(t)|\leq f_M$ for all $t \geq 0$,}
\item \textit{$\|\hat K(t)\|_2 \leq M_K$, $\|\hat K_1(t)\|_2 \leq M_{K_1}$ and $ m_l \leq |\hat l(t)| \leq M_l$ for all $t \geq 0$,}
\item \textit{$X(t) \to X_m^s(t)$ as $t \to \infty$,}
\item \textit{$\|X(t)\|_2 < M_x$ and $|u^s(t)| \leq M_u$ for all $t \geq 0$.}
\end{enumerate}
\end{theorem}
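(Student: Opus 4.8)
The plan is to mirror the proofs of Theorems \ref{th:state_cons} and \ref{th:state_ip_cons}, carrying the extra parameter estimate $\hat{K}_1$ through an enlarged barrier Lyapunov function. I would first establish statement 1. The update law \eqref{eq:Lsupdt} keeps $|\hat{l}(t)|\ge m_l>0$, so $g$ in \eqref{eq:g_non} is well defined and continuous, and by \eqref{eq:g_non} and \eqref{eq:thp:u_non} one has $f+g=f$ when $|u|<M_u$, $f+g=(M_u-\hat{K}^\top X-\hat{K}_1^\top\Phi(X))/\hat{l}$ when $u\ge M_u$, and $f+g=(-M_u-\hat{K}^\top X-\hat{K}_1^\top\Phi(X))/\hat{l}$ when $u\le -M_u$. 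In the first case $|f+g|=|f|\le f_M$ by hypothesis; in the saturated cases, splitting on ${\rm Sign}(l)$ exactly as in the proof of statement 1 of Theorem \ref{th:state_ip_cons}, one of the two required bounds follows by dividing the defining inequality $u\ge M_u$ (resp. $u\le-M_u$) by $\hat{l}$ and using $|f|\le f_M$, and the other follows directly from Assumption \ref{as:stablity_non}, i.e. $M_u\ge|\hat{K}^\top X+\hat{K}_1^\top\Phi(X)|-\hat{l}\,{\rm Sign}(l)f_M$. Hence $|f(t)+g(t)|\le f_M$ for all $t\ge 0$.

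For statements 2--4, set $\gamma={\rm Sign}(l)/l=1/|l|>0$, $\tilde{K}_1=\hat{K}_1-K_1$, and
$$V^s(t)=\frac{{E^s}^\top P E^s}{M^2-{E^s}^\top P E^s}+\frac{\gamma}{2\Gamma_K}\tilde{K}^\top\tilde{K}+\frac{\gamma}{2\Gamma_{K_1}}\tilde{K}_1^\top\tilde{K}_1+\frac{\gamma}{2\Gamma_l}\tilde{l}^2.$$
Differentiating \eqref{eq:non_Sys} and \eqref{eq:modtarget} with the matching conditions $A=A_m-B\lambda K^\top$, $A_1=-B\lambda K_1^\top$, $B_m=B\lambda l$ and the control $u^s$ gives $\dot{E}^s=A_m E^s+B\lambda\big(\tilde{K}^\top X+\tilde{K}_1^\top\Phi(X)+\tilde{l}(f+g)\big)$; then, using the Lyapunov equation of Assumption \ref{as:Lyap} and $B\lambda=B_m/l$, the computation that produced \eqref{eq:Vdot2} yields
$$\dot{V}^s=\frac{-M^2{E^s}^\top Q E^s}{(M^2-{E^s}^\top P E^s)^2}+\gamma\Big(\mu^s X^\top+\tfrac{1}{\Gamma_K}\dot{\hat{K}}^\top\Big)\tilde{K}+\gamma\Big(\mu^s\Phi(X)^\top+\tfrac{1}{\Gamma_{K_1}}\dot{\hat{K}}_1^\top\Big)\tilde{K}_1+\gamma\Big(\mu^s(f+g)+\tfrac{1}{\Gamma_l}\dot{\hat{l}}\Big)\tilde{l},$$
with $\mu^s$ as in \eqref{eq:Lambda_s}. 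Each bracketed projection term vanishes in the interior/favourable case and is $\le 0$ in the boundary/unfavourable case; for $\hat{K}_1$ this uses $\|\hat{K}_1\|_2=M_{K_1}\ge\|K_1\|_2$, so that $\|\hat{K}_1\|_2^2-\hat{K}_1^\top K_1\ge 0$, exactly as for $\hat{K}$ in Theorem \ref{th:state_cons}, while the same $V_2$-type projection arguments as in the proof of statement 1 of Theorem \ref{th:state_cons}, now applied also to $\hat{K}_1^\top\hat{K}_1/2$ and to \eqref{eq:Lsupdt}, give statement 2. Thus $\dot{V}^s\le 0$, $V^s$ is non-increasing, and since ${E^s}^\top(0)PE^s(0)<M^2$, $E^s$ is continuous, and $V^s$ blows up on $\{{E^s}^\top PE^s=M^2\}$, the barrier argument forces ${E^s}^\top(t)PE^s(t)<M^2$ for all $t$, indeed ${E^s}^\top PE^s\le M^2V^s(0)/(1+V^s(0))$, which stays uniformly below $M^2$. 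Combined with statement 1 and Assumption \ref{as:target} (bounding $\|X_m^s\|_2\le M_{x_m}$), the argument of Lemma \ref{lm:sec4} gives $\|X(t)\|_2<M_x$ for all $t$. Boundedness of $E^s,\hat{K},\hat{K}_1,\hat{l},X$, hence of $\Phi(X)$ (globally Lipschitz, so of at most linear growth) and of $f,g$, makes $\dot{E}^s$ bounded and the integrand of $\int_0^\infty(-\dot{V}^s)\,dt$ uniformly continuous, so Barbalat's lemma and $Q>0$ give $E^s(t)\to 0$, i.e. statement 3. Finally, substituting \eqref{eq:g_non} and \eqref{eq:thp:u_non} into $u^s$ gives $u^s=u$ when $|u|<M_u$ and $u^s=M_u\,{\rm Sign}(u)$ when $|u|\ge M_u$ (because $\hat{l}g=\pm M_u-u$), so $|u^s(t)|\le M_u$ for all $t$; together with $\|X\|_2<M_x$ this is statement 4.

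The main obstacle is bookkeeping rather than a new idea: one must verify that the projection operator in \eqref{eq:K1supdt} renders the $\tilde{K}_1$-cross-term of $\dot{V}^s$ non-positive (which needs $\|\hat{K}_1\|_2=M_{K_1}\ge\|K_1\|_2$, guaranteed by Assumption \ref{as:match_non}), and, for the Barbalat step, that $\Phi(X)$ inherits boundedness from $X$ --- which is legitimate because the barrier argument first secures $\|X\|_2<M_x$ with no reference to $\Phi$, so no circularity arises.
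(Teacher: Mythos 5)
Your proof is correct and follows exactly the route the paper intends: the paper's entire proof of this theorem is the single sentence ``The proof of Theorem \ref{th:state_ip_cons_non} follows from the proof of Theorem \ref{th:state_ip_cons},'' and your writeup correctly supplies the omitted details (the augmented barrier Lyapunov function with the $\tilde K_1$ term, the projection inequality using $\|K_1\|_2\le M_{K_1}$, and the observation that $\Phi(X)$ inherits boundedness from the barrier bound on $X$ without circularity). No gaps.
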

The proof of Theorem \ref{th:state_ip_cons_non} follows from the proof of Theorem \ref{th:state_ip_cons}.
\section{Numerical example}\setcounter{equation}{0}
\label{sec5}

In this section, we demonstrate the efficacy of our control law proposed in Section \ref{sec4} using an example. In particular, we show that the system state $X$ in \eqref{eq:sys} can closely track the target system state $X_m$ in \eqref{eq:target} while maintaining the state and the input constraints for all time. The components of $X$ and $X_m$ are denoted by $x_i$ and $x_{m_i}$ respectively.

\noindent\textbf{Example:} Consider a third-order linear system whose state space model is given below:
\begin{equation}\label{eq:ex2_sys}
\!\!\!\!\dot{X}(t) \!=\! A\!X(t) + B\lambda u(t), \ \ X(0) = X_0
\end{equation}
where\vspace{-1mm}
$$A = \bbm{-0.5 & 1 & 1.85\\
  -1.2 &-1.7 &  -0.6\\
   2.5 & 0 & -0.4}, \ \ B = \bbm{0.5\\0\\ 1},$$
$\lambda = 0.5$ and $X_0 = \bbm{0.3 & -0.2 & 0.2}^\top$. Since at least one of the eigenvalues of $A$ has a positive real part, the above system is unstable. Further, in this example, the pair $(A,B\lambda)$ is controllable. Suppose $A$ and $\lambda$ are unknown but the sign of $\lambda$ is known.  We consider the following parameters for the state and input constraints given in \eqref{eq:state_constraint} and \eqref{eq:input_constraint}: $M_u = 3$ and $M_x = 2$.

Let the target system be as follows: $\forall \ t\geq 0$
\begin{equation}\label{eq:ex2_tar}
\dot{X}_m(t) = A_mX_m(t) + B_mf(t), \ \ X_m(0) = X_{m_0},
\end{equation}
where 
$$A_m = \bbm{-2 &  1.5 & 1.1\\
  -1.2 &-1.7 & -0.6\\
   -0.5 & 1 & -1.9}, \ \ B_m = \bbm{0.5\\0\\ 1}$$
and $X_{m_0} = \bbm{0.3 & -0.2 & 0.2}^\top$. Since all the eigenvalues of $A_m$ have negative real parts, the above target system is exponentially stable. 

Now, we verify the Assumptions \ref{as:match}, \ref{as:Lyap} and \ref{as:target}. Clearly, Assumption \ref{as:match} holds with $l = 2$ and $K = \bbm{-6& 2 &-3}^\top$ and since $A_m$ is Hurwitz, Assumption \ref{as:Lyap} also holds. Further, in Assumption \ref{as:match}, we consider $M_K = 10$, $m_l = 1$ and $M_l = 4$ (note that these bounds can be arbitrary). For the target system in \eqref{eq:ex2_tar} if we choose $f_M = 2.4$ and $M_{x_m} = 1.9$, then Assumption \ref{as:target} holds with these values of $M_{x_m}$ and $f_M$. Therefore, in \eqref{eq:ex2_tar}, we choose $f(t) = 1.4\sin(2t)+\sin(2.5t)$ which satisfies $|f(t)|\leq f_M$ for all $t \geq 0$. For this example, we verify Assumption \ref{as:stability_cond} online during the experiment.

From the definitions of $M_e$ and $M$ in Section \ref{sec4}, we get $M_e = 0.1$ and $M = 0.0474$. Next, we implement our control law given in \eqref{eq:mod_input}-\eqref{eq:Lsupdt} and the proposed reference modification given in \eqref{eq:g} with $\hat K (0) = \bbm{0.1 & 0.1 & 0.1}^\top$ , $\hat l (0) = 3$, { $\Gamma_K = 1$ and $\Gamma_l = 0.05$}. While implementing our proposed control law,  we modify the reference trajectory $f$ in \eqref{eq:ex2_tar} as $f+g$ and denote the modified target system state as $X_m^s$. We further denote $E^s$ as $E^s = X - X_m^s$ and the components of $X_m^s$ as $x_{m_i}^s$. We obtain the following results.\vspace{-8mm}
\begin{figure}[!htbp]
\centering{%
\resizebox*{9.3cm}{!}{\includegraphics{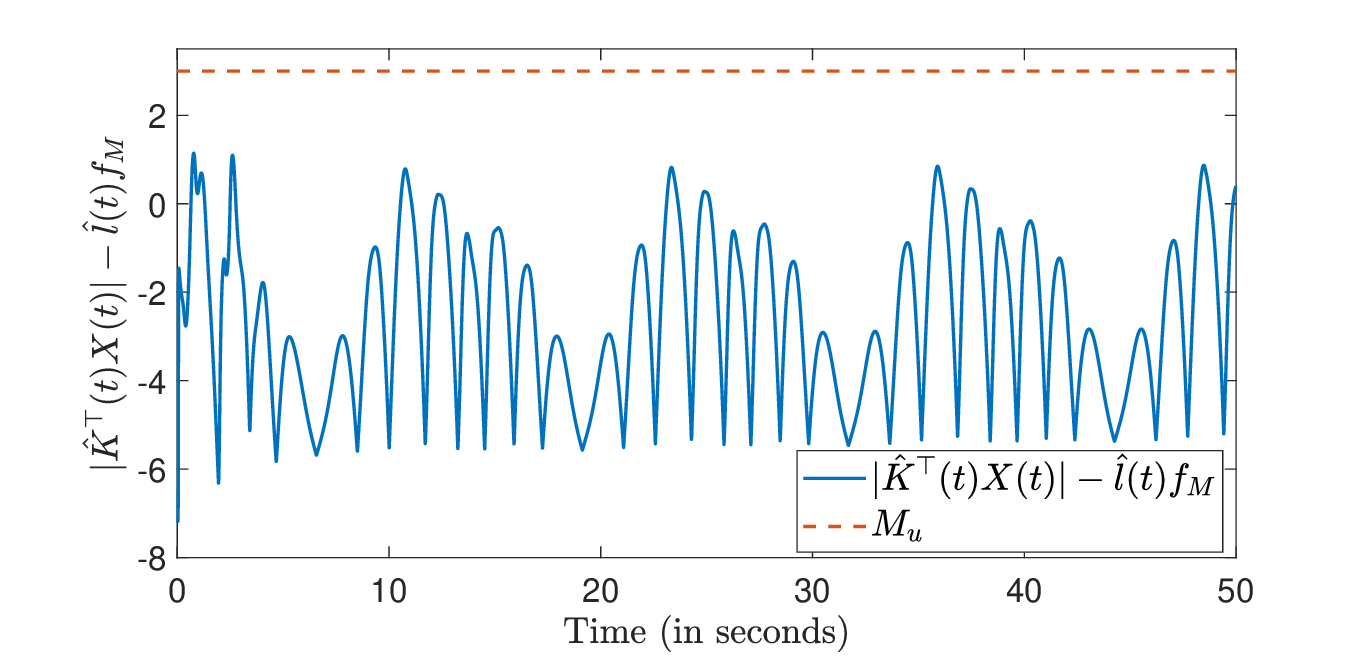}}}\hspace{5pt}
\caption{The stability condition in Assumption \ref{as:stability_cond} is satisfied for all time $t \geq 0$ when the value of $M_u$ is 3.} \label{Figure1}
\end{figure}
\begin{figure}[h!]
\centering{%
\resizebox*{9.3cm}{!}{\includegraphics{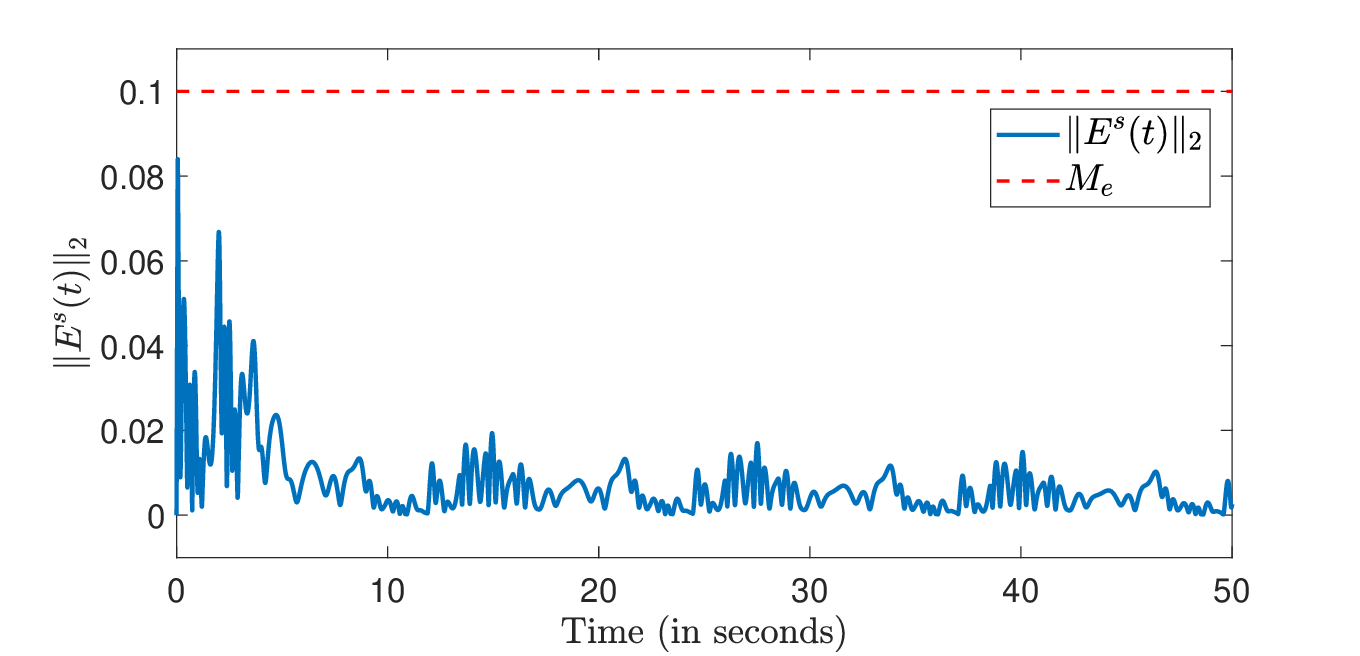}}}\hspace{5pt}
\caption{The modified tracking error $E^s(t) = X(t)-X_m^s(t)$ goes to zero as $t$ goes to infinity. Further, $\|E^s(t)\|_2 < M_e$ for all $t \geq 0$.} \label{Figure2}
\end{figure}
\begin{figure}[h!]
\centering{%
\resizebox*{9.3cm}{!}{\includegraphics{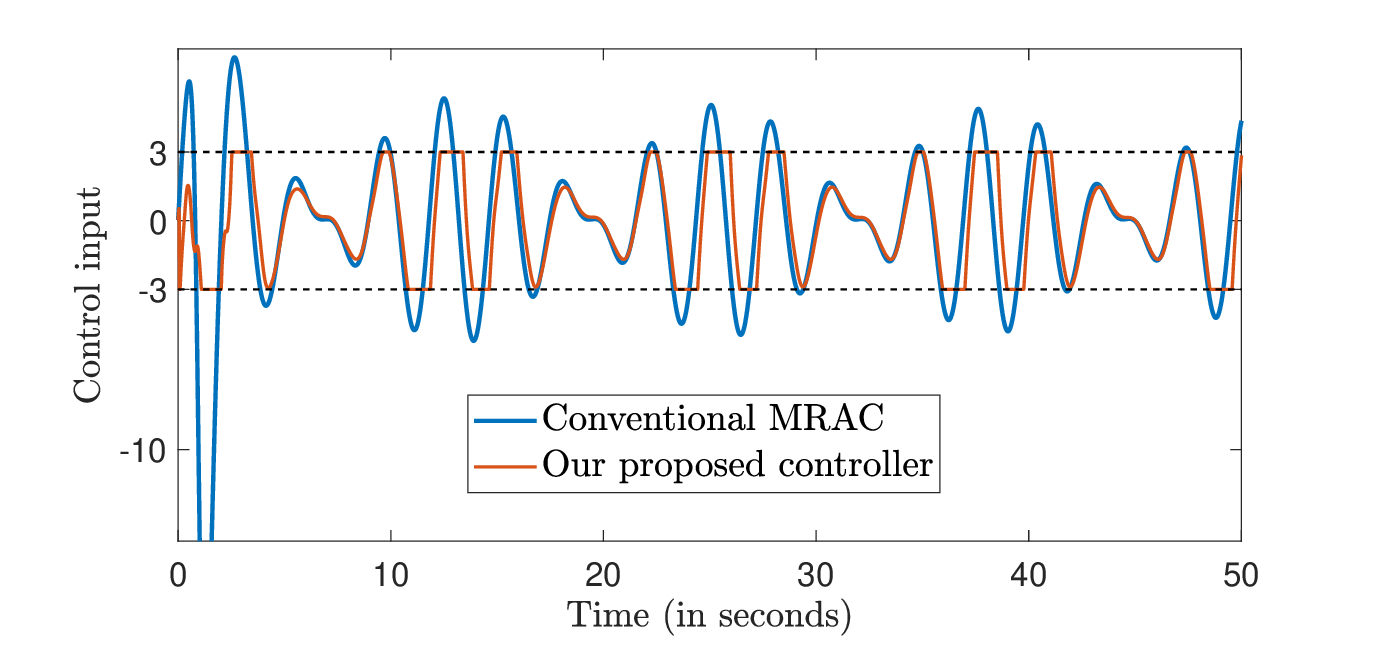}}}\hspace{5pt}
\caption{Our proposed control input $u^s$ (shown in red) given in \eqref{eq:mod_input} satisfies $|u^s(t)| \leq M_u= 3$ for all $t \geq 0$. However, the control input designed using a conventional MRAC approach  (shown in blue) fails to satisfy the input constraint.} \label{Figure3}
\end{figure}
\begin{figure}[h!]
\centering{%
\resizebox*{9.3cm}{!}{\includegraphics{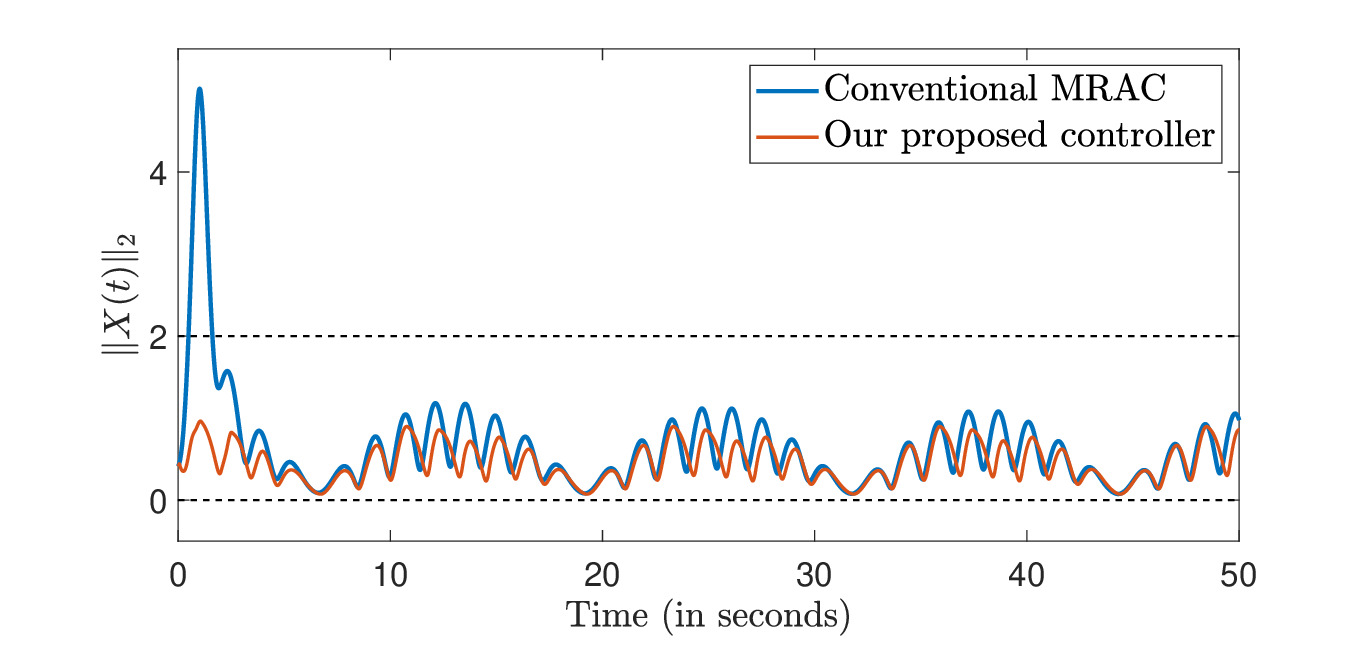}}}\hspace{5pt}
\caption{The state $X$ given in \eqref{eq:ex2_sys} satisfies $\|X(t)\|_2 < M_x= 2$ for all $t \geq 0$ when the control input is as in \eqref{eq:mod_input}. However, the plant state controlled using a conventional MRAC fails to satisfy the state constraint.} \label{Figure4}
\end{figure}
\begin{figure}[h!]
\centering{%
\resizebox*{9.3cm}{!}{\includegraphics{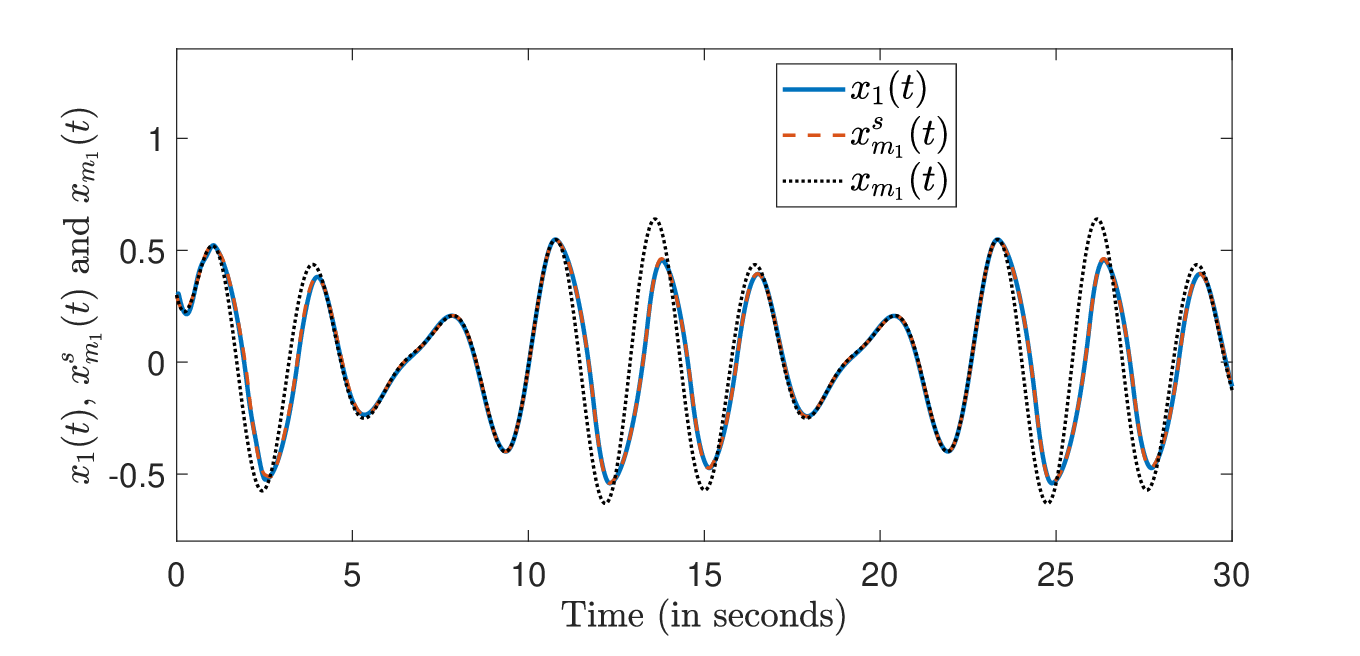}}}\hspace{5pt}
\caption{The state $x_1$ (shown in blue) tracks the modified target system state $x_{m_1}^s$ (shown in red). The deviation of $x_{m_1}^s(t)$ (red line) from $x_{m_1}(t)$ is due to the reference trajectory modification given in \eqref{eq:g}.} \label{Figure5}
\end{figure}
\begin{figure}[!htbp]
\centering{%
\resizebox*{9.3cm}{!}{\includegraphics{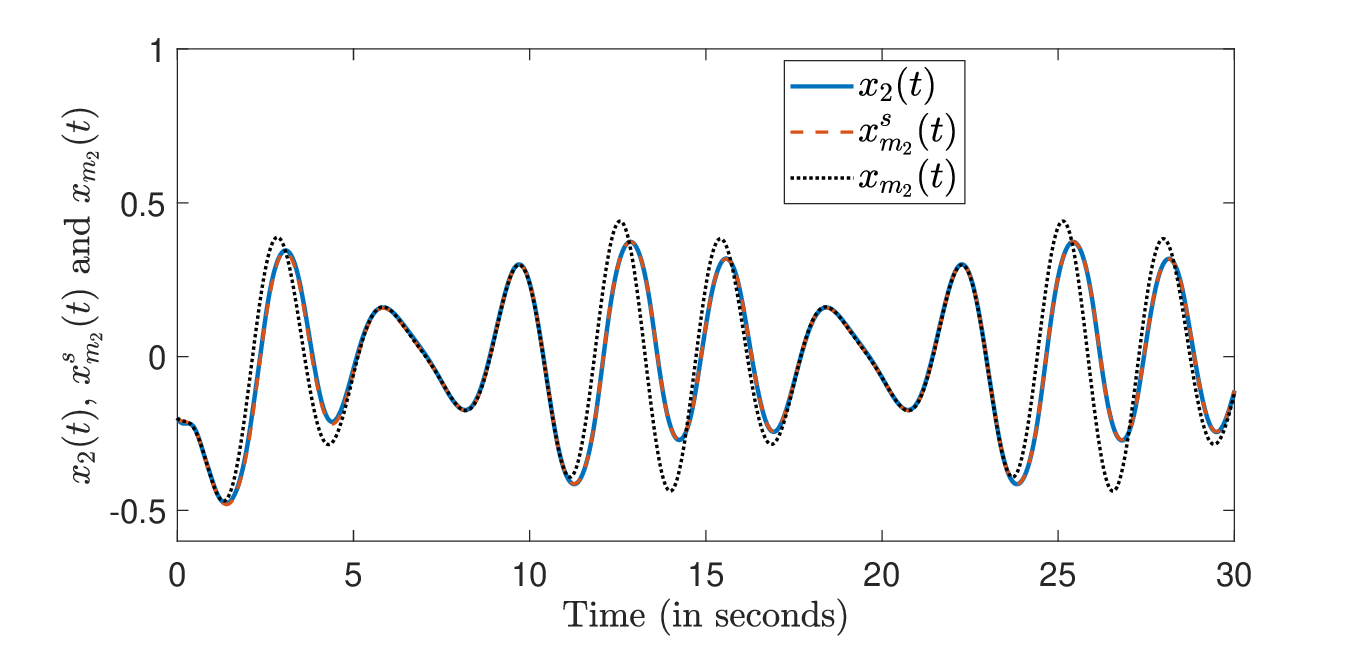}}}\hspace{5pt}
\caption{The state $x_2$ (shown in blue) tracks the modified target system state $x_{m_2}^s$ (shown in red). The deviation of $x_{m_2}^s(t)$ (red line) from $x_{m_2}(t)$ is due to the reference trajectory modification given in \eqref{eq:g}.} \label{Figure6}
\end{figure}
\begin{figure}[!htbp]
\centering{%
\resizebox*{9.3cm}{!}{\includegraphics{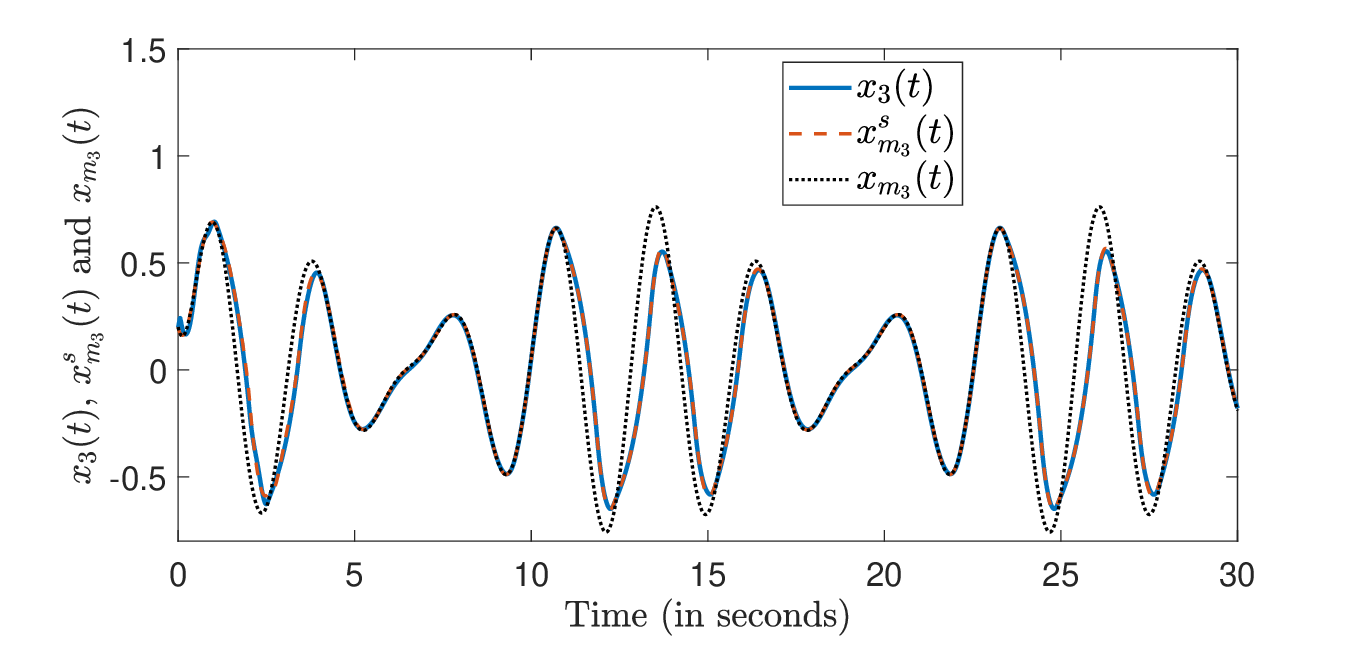}}}\hspace{5pt}
\caption{The state $x_3$ (shown in blue) tracks the modified target system state $x_{m_3}^s$ (shown in red). The deviation of $x_{m_3}^s(t)$ (red line) from $x_{m_3}(t)$ is due to the reference trajectory modification given in \eqref{eq:g}.} \label{Figure7}
\end{figure}
\begin{figure}[!htbp]
\centering{%
\resizebox*{9.3cm}{!}{\includegraphics{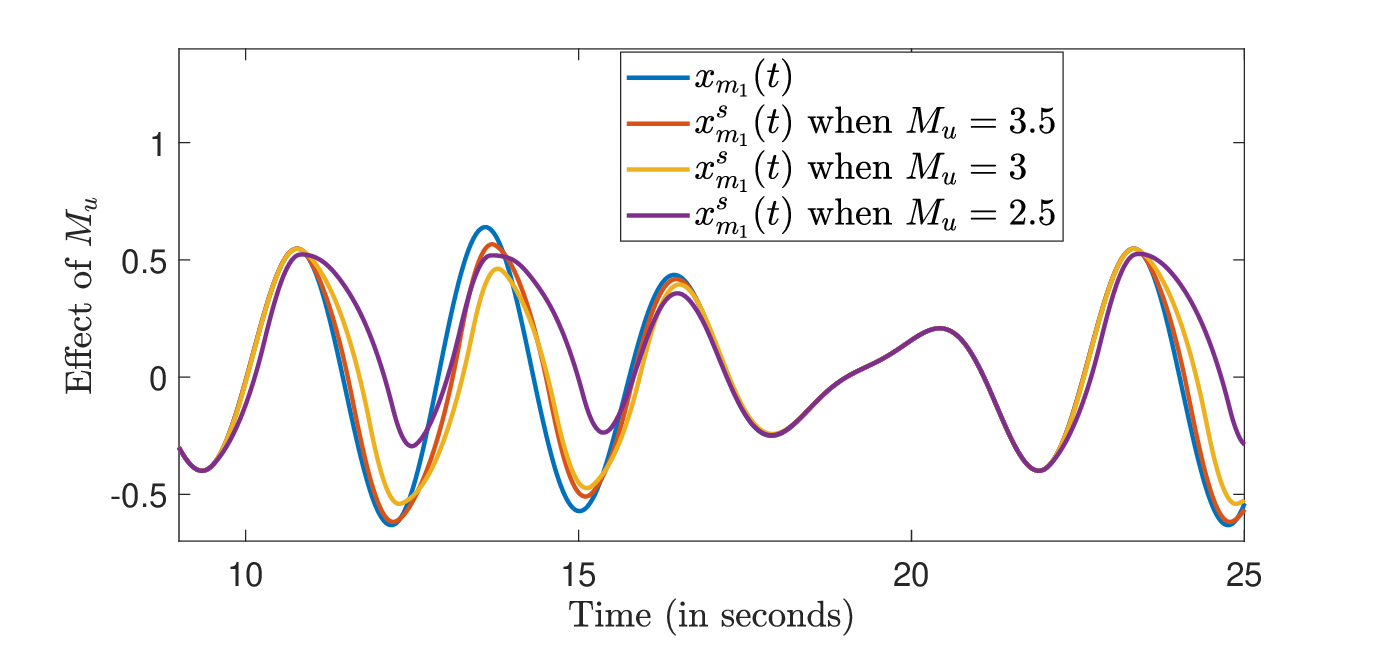}}}\hspace{5pt}
\caption{As $M_u$ increases, the effect of reference modification on $x_{m_1}$ reduces. When $M_u = 3.5$, the trajectory of $x_{m_1}^s$ is very close to the trajectory of the original target system state $x_{m_1}.$} \label{Figure8}
\end{figure}

Figure \ref{Figure1} shows that the stability condition in Assumption \ref{as:stability_cond} is satisfied for all time $t \geq 0$. Figure \ref{Figure2} shows that the modified tracking error $E^s(t) = X(t)-X_m^s(t)$ goes to zero as $t$ goes to infinity and $\|E^s(t)\|_2\leq M_e$ for all $t \geq 0$. Figures \ref{Figure3} and \ref{Figure4} show that the state and input constraints given in Section \ref{sec2} are satisfied for all $t \geq 0$ when the system in \eqref{eq:ex2_sys} is controlled using our proposed controller. Figures \ref{Figure5}-\ref{Figure7} show that the system state $X$ can closely track the target system state $X_m$. As $M_u$ increases, the effect of reference modification on the trajectory of $X_{m}$ should reduce. This is validated in Figure \ref{Figure8}. All these results validate the theoretical analysis done in Theorem \ref{th:state_ip_cons}. 

\textit{Note:} The difference between the blue trajectory in Figure \ref{Figure1} and the value of $M_u$ gives an indication of how much we can reduce $M_u$ without affecting the stability of the closed-loop dynamics of the plant. This knowledge is helpful in the case of practical safety-critical applications.

\section{Conclusions} \label{sec6}

In this paper, we have proposed a modified MRAC that can handle both state and input constraints in uncertain linear time-invariant systems. Our design approach for the modified MRAC involves using a barrier Lyapunov function to handle the state constraints and reference system modification to handle the input constraints. We have shown that under an easily verifiable stability condition, the system tracking error goes close to zero and both the state and input constraints remain satisfied for all time. We have also shown that, unlike existing works, our stability condition is given in terms of measurable or known quantities, making it possible to verify the condition even before implementing the control algorithm. We have demonstrated the efficacy of our control algorithm using a linear system example. Future work will focus on output feedback adaptive control under constraints.

\section{Acknowledgment}
Srikant Sukumar was partially sponsored for this research work by the Indo-French Center for Promotion of Advanced Research (IFCPAR) under collaborative grant 6001-1.

\bibliography{ref}

\end{document}